\theoremstyle{plain}
\newtheorem{theorem}{Theorem}[section]
\newtheorem{lemma}[theorem]{Lemma}
\newtheorem{proposition}[theorem]{Proposition}
\theoremstyle{definition}
\theoremstyle{remark}
\newtheorem{remark}[theorem]{Remark}
\numberwithin{equation}{section}
\begin{document}
\title{The differential-algebraic and bi-Hamiltonian integrability analysis
of the Riemann type hierarchy revisited}
\author{Yarema A. Prykarpatsky$^{1},$ \ Orest D. Artemovych$^{2},$ Maxim V.
Pavlov$^{3}$ and Anatoliy K. Prykarpatsky$^{4}$ }
\address{ $^{1}$The Department of Applied Mathematics at the Agrarian
University of Krakow, Poland, and the Institute of Mathematics of NAS, Kyiv,
Ukraine}
\email{yarpry@gmail.com}
\address{$^{2}$ The Department of Algebra and Topology at the Faculty of
Mathematics and Informatics of the Vasyl Stefanyk Pre-Carpathian National
University, Ivano-Frankivsk, Ukraine, and the Institute of Mathematics and
Informatics at the Tadeusz Kosciuszko University of Technology, Crac\'{o}w,
Poland; }
\email{artemo@usk.pk.edu.pl}
\address{$^{3}$The Department of Mathematical Physics, P.N. Lebedev Physical
Institute of Russian Academy of Sciences, Moscow, 53 Leninskij Prospekt,
Moscow, Russia, and \ Laboratory of Geometric Methods in Mathematical
Physics, Department of Mech. \& Math., Moscow State University, 1 Leninskie
gory, Moscow, Russia }
\email{maxim@math.sinica.edu.tw }
\address{$^{4}$ AGH University of Science and Technology, the Department of
Mining Geodesy, Crac\'{o}w 30059, Poland }
\email{pryk.anat@ua.fm}
\subjclass{35A30, 35G25, 35N10, 37K35, 58J70,58J72, 34A34; PACS: 02.30.Jr,
02.30.Hq}
\keywords{differential-algebraic methods, gradient holonomic algorithm, Lax
type integrability, compatible Poissonian structures, Lax type representation%
}
\maketitle

\begin{abstract}
A differential-algebraic approach to studying the Lax type integrability of
the generalized Riemann type hydrodynamic hierarchy is revisited, its new
Lax type representation is \ constructed in exact form. The related
bi-Hamiltonian integrability and compatible Poissonian structures of the
generalized Riemann type hierarchy are also discussed by means of the
gradient-holonomic and geometric methods.
\end{abstract}

\section{Introduction}

Recently new mathematical approaches, \ based on differential-algebraic and
differential geometric methods and techniques, were applied in works \cite%
{GPPP,Wi,PAPP} for studying the Lax type integrability of nonlinear
differential equations of Korteweg-de Vries and Riemann type. \ In
particular, a great deal of analytical studies \cite%
{PP,PoP,GBPPP,Pav,GPPP,Wan} were devoted to finding the corresponding
Lax-type representations of the infinite Riemann type hydrodynamical
hierarchy 
\begin{equation}
D_{t}^{N}u=0,\ \ \ \ D_{t}:=\partial /\partial t+u\partial /\partial x,
\label{r01}
\end{equation}%
where $N\in {\mathbb{Z}_{+}},$ $(x,t)^{\intercal }\in \mathbb{R}^{2}$ and $%
u\in C^{\infty }(\mathbb{R}/2\pi \mathbb{Z};\mathbb{R}).$ It was found that
the related dynamical system 
\begin{equation}
D_{t}u_{1}=u_{2},\ ...,D_{t}u_{j}=u_{j+1},...,D_{t}u_{N}=0,\ \ \ \ 
\label{r02}
\end{equation}%
defined on a $2\pi $-periodic infinite-dimensional smooth functional
manifold $M_{N}\subset C^{\infty }(\mathbb{R}/2\pi \mathbb{Z};\mathbb{R}%
^{N}),$ possesses \cite{PoP,PAPP} \ for an arbitrary integer $N\in \mathbb{Z}%
_{+}$ a suitable Lax type representation 
\begin{equation}
D_{x}f=l_{N}[u;\lambda ]f,\ \ \ D_{t}f=q_{N}(\lambda )f  \label{r03}
\end{equation}%
with $\lambda \in \mathbb{C}$ being a complex spectral parameter and $f\in
L_{\infty }(\mathbb{R};\mathbb{C}^{N})$ and \ matrices $l_{N}[u;\lambda
],q_{N}(\lambda )\in End\mathbb{C}^{2}.$ Here, by definition, $u_{1}:=u\in
C^{\infty }(\mathbb{R}^{2};\mathbb{R})\ $and\ the differentiations \ 
\begin{equation}
D_{t}:=\partial /\partial t+u_{1}D_{x},\ \ D_{x}:=\partial /\partial x
\label{r03a}
\end{equation}%
satisfy \ on the manifold $M_{N}$ \ the following commutation relationship: 
\begin{equation}
\lbrack D_{x},D_{t}]=(D_{x}u_{1})D_{x}.  \label{r04}
\end{equation}%
In particular, for the cases $N=\overline{2,3}$ and $N=4$ the following
exact matrix expressions 
\begin{eqnarray}
&&l_{2}[u;\lambda ]=\left( 
\begin{array}{cc}
\lambda u_{1,x} & u_{2,x} \\ 
-2\lambda ^{2} & -\lambda u_{1,x}%
\end{array}%
\right) ,\ \ \ q_{2}(\lambda )=\left( 
\begin{array}{cc}
0 & 0 \\ 
-\lambda & 0%
\end{array}%
\right) ,  \notag \\
&&  \label{r05} \\
&&l_{3}[u;\lambda ]=\left( 
\begin{array}{ccc}
\lambda ^{2}u_{1,x} & -\lambda u_{2,x} & u_{3,x} \\ 
3\lambda ^{3} & -2\lambda ^{2}u_{1,x} & \lambda u_{3,x} \\ 
6\lambda ^{4}r_{3}^{(1)}[u] & -3\lambda ^{3} & \lambda ^{2}u_{1,x}%
\end{array}%
\right) ,\text{ \ }q(\lambda ):=\left( 
\begin{array}{ccc}
0 & 0 & 0 \\ 
\lambda & 0 & 0 \\ 
0 & \lambda & 0%
\end{array}%
\right) ,  \notag \\
&&  \notag \\
&&l_{4}[u;\lambda ]=\left( 
\begin{array}{cccc}
-\lambda ^{3}u_{1,x} & \lambda ^{2}u_{2,x} & -\lambda u_{3,x} & u_{4,x} \\ 
-4\lambda ^{4} & 3\lambda ^{3}u_{1,x} & -2\lambda ^{2}u_{2,x} & \lambda
u_{3,x} \\ 
-10\lambda ^{5}r_{N}^{(1)}[u] & 6\lambda ^{4} & -3\lambda ^{3}u_{1,x} & 
\lambda ^{2}u_{2,x} \\ 
-20\lambda ^{6}r_{4}^{(2)}[u] & 10\lambda ^{5}r_{4}^{(1)}[u] & -4\lambda ^{4}
& \lambda ^{3}u_{1,x}%
\end{array}%
\right) ,\text{ \ }q_{4}(\lambda ):=\left( 
\begin{array}{cccc}
0 & 0 & 0 & 0 \\ 
\lambda & 0 & 0 & 0 \\ 
0 & \lambda & 0 & 0 \\ 
0 & 0 & \lambda & 0%
\end{array}%
\right) ,\ \ \   \notag
\end{eqnarray}%
polynomial in $\lambda \in \mathbb{C},$ \ were presented in exact form. \ A
similar Lax type representation, as it follows from the statements of the
work \cite{PAPP}, also holds for the general case $N\in \mathbb{Z}_{+}.$
These results are mainly based on a new differential-algebraic approach,
devised recently in work \cite{PAPP} and contain complicated enough \cite%
{PoP} functional expressions $r_{N}^{(j)},$ $j=\overline{1,N-2},$ which
satisfy the recurrent differential-functional equations

\begin{equation}
\left\{ 
\begin{array}{l}
D_{t}r_{N}^{(1)}+r_{N}^{(1)}D_{x}u^{(1)}=1, \\ 
D_{t}r_{N}^{(2)}+r_{N}^{(2)}D_{x}u^{(1)}=r_{N}^{(1)}, \\ 
... \\ 
D_{t}r_{N}^{(j+1)}+r_{N}^{(j+1)}D_{x}u^{(1)}=r_{N}^{{(j)}},\ \ \ j=\overline{%
1,N-3}, \\ 
... \\ 
D_{t}r_{N}^{(N-2)}+r_{N}^{(N-2)}D_{x}u^{(1)}=r_{N}^{{(N-3)}}%
\end{array}%
\right.  \label{r06}
\end{equation}%
on the functional manifold $M_{N}.$ $\ $Recently Popowicz \cite{Po} has
found a new Lax representation for the generalized Riemann equation \ (\ref%
{r02}) at arbitrary $N\in \mathbb{Z}_{+}.$ In the present article, being
strongly based on the methods devised in works \cite{PAPP,PoP}, we will
revisit the differential-algebraic approach to the Riemann type
hydrodynamical hierarchy \ (\ref{r01}) and construct its new Lax type
representation 
\begin{eqnarray}
l_{N}[u;\lambda ] &=&\left( 
\begin{array}{ccccc}
\lambda u_{N-1,x} & u_{N,x} & 0 & ... & 0 \\ 
0 & \lambda u_{N-1,x} & 2u_{N,x} & \ddots & ... \\ 
... & \ddots & \ddots & \ddots & 0 \\ 
0 & ... & 0 & \lambda u_{N-1,x} & (N-1)u_{N,x} \\ 
-N\lambda ^{N} & -\lambda ^{N-1}Nu_{1,x} & ... & -\lambda ^{2}Nu_{N-2,x} & 
\lambda (1-N)u_{N-1,x}%
\end{array}%
\right) ,\text{ }  \notag \\
&&  \notag \\
q_{N}(\lambda ) &=&\left( 
\begin{array}{ccccc}
0 & 0 & 0 & 0 & 0 \\ 
-\lambda & 0 & 0 & 0 & 0 \\ 
0 & -\lambda & \ddots & ... & 0 \\ 
0 & 0 & \ddots & 0 & 0 \\ 
0 & 0 & 0 & -\lambda & 0%
\end{array}%
\right) ,\ \   \label{r06a}
\end{eqnarray}%
in a very simple and useful for applications form for any $N\in \mathbb{Z}%
_{+},$ which is scale equivalent to that found by Popowicz \ \cite{Po}. \
Moreover, \ we prove that this Riemann type hydrodynamical hierarchy
generates the bi-Hamiltonian flows on the manifold $M_{N}$ and, finally, we
analyze for the cases $N=\overline{2,3}$ and $N=4$ the corresponding
compatible \cite{Bl,PM,FT} Poissonian structures, following within the
gradient-holonomic scheme from these new Lax type representations. A \
mathematical nature of the Lax type representations \ (\ref{r03}),\ (\ref%
{r05}) \ presents from the differential-algebraic point of view \ a very
interesting question, an answer to which may be useful for the integrability
theory, remains open and needs additional investigations.

It is also worth to mention that the methods devised in this work can be
successfully applied to other interesting for applications \cite%
{Wh,Os,Va,BS,DHH} nonlinear dynamical systems, such as Burgers, Korteweg-de
Vries and Ostrovsky-Vakhnenko equations 
\begin{equation}
D_{t}u=D_{x}^{2}u,\text{ \ \ }D_{t}u=D_{x}^{3}u,\text{ \ }D_{x}D_{t}u=-u,
\label{R06b}
\end{equation}%
on the $2\pi $-periodic functional manifold $M_{1}\subset C^{(\infty )}(%
\mathbb{R}/2\pi \mathbb{Z};\mathbb{R)},$ new infinite hierarchies \cite%
{PrYa,BPAP} of Riemann type hydrodynamic systems

\begin{equation}
D_{t}^{N-1}u=D_{x}^{4}\bar{z},\text{ \ \ \ }D_{t}\bar{z}=0,  \label{R06c}
\end{equation}%
and%
\begin{equation}
D_{t}^{N-1}u=\bar{z}_{x}^{2},\text{ \ \ \ \ }D_{t}\bar{z}=0  \label{R06d}
\end{equation}%
on a smooth $2\pi $-periodic functional manifold $M_{N}\subset $ $C^{(\infty
)}(\mathbb{R}/2\mathbb{\pi Z};\mathbb{R}^{N})$ for $N\in \mathbb{N}\ $\ and
many others.

\section{Differential-algebraic approach revisiting}

We will consider the ring $\mathcal{K}:=\mathbb{R}\{\{x,t\}\},$ $%
(x,t)^{\intercal }\in \mathbb{R}^{2},$ of the convergent germs of
real-valued smooth functions from $C^{\infty }(\mathbb{R}^{2};\mathbb{R})$
and construct \cite{Kap, Rit, Ko,Wei,CH} the associated differential
polynomial ring $\mathcal{K}\{u\}:=\mathcal{K}[\Theta u]$ with respect to a
functional variable $u_{1}:=u\in \mathcal{K},$ \ where $\Theta $ denotes the
standard monoid of \ the all commuting differentiations $\partial /\partial
x $ and $\partial /\partial t.$ The ideal $I\{u\}\subset \mathcal{K}\{u\}$
is called differential if the condition $I\{u\}=\Theta I\{u\}$ holds.

In the differential ring $\mathcal{K}\{u\},$ interpreted $\ $as an invariant
differential ideal in $\mathcal{K},$ there are naturally defined two
differentiations 
\begin{equation}
D_{t},\ D_{x}:\ \mathcal{K}\{u\}\rightarrow \mathcal{K}\{u\},  \label{r07}
\end{equation}%
satisfying the Lie-commutator relationship (\ref{r04}). \ For a general
function $u\in \mathcal{K}$ \ \ there exists the only representation of (\ref%
{r04}) in the ideal $\mathcal{K}\{u\}$ of form (\ref{r03a}). Nonetheless, if
the additional constraint \ (\ref{r01}) holds, its linear finite dimensional
matrix representation in the corresponding functional vector space $\mathcal{%
K}\{u\}^{N}$ for $N\in \mathbb{Z}_{+},$ related with some finitely generated
invariant differential ideal $\mathcal{I}\{u\}\subset \mathcal{K}\{u\},$ may
exists being thereby equivalent to the corresponding Lax type representation
for the Riemann type dynamical system (\ref{r02}). To make this scheme
analytically feasible, we consider in detail the cases $N=\overline{1,4}$
and construct the corresponding linear finite dimensional matrix
representations of the Lie-commutator relationship (\ref{r04}), polynomially
depending on an arbitrary spectral parameter $\lambda \in \mathbb{C}.$

\subsection{The case $N=1$}

Aiming to find the corresponding representation vector space for the
Lie-algebraic relationship (\ref{r04}) at $N=1,$ we need firstly to
construct \cite{PAPP} a so called invariant generating Riemann differential
ideal $R\{u\}\subset \mathcal{K}\{u\}$ as 
\begin{equation}
R\{u\}:=\{\sum\limits_{j\in \mathbb{Z}_{+}}\sum\limits_{n\in \mathbb{Z}%
_{+}}\lambda ^{-(n+j)}f_{n}^{(j+1)}D_{t}^{j}D_{x}^{n}u\in \mathcal{R}\{u\}:\
f_{n}^{(j+1)}\in \mathcal{K},\ j,n\in \mathbb{Z}_{+}\},  \label{r09}
\end{equation}%
where $\lambda \in \mathbb{R}\backslash \{0\}$ is an arbitrary parameter. \
The differential ideal \ (\ref{r09}) is, evidently, invariant and
characterized by the following lemma.

\begin{lemma}
\label{Lm_01} The kernel $\mathrm{Ker}D_{t}\subset R\{u\}$ of the
differentiation $D_{t}:\mathcal{K}\{u\}\rightarrow \mathcal{K}\{u\}$ is
generated by elements $f^{(j)}\in \mathbb{R}\{\{x,t\}\},j\in \mathbb{Z}_{+},$
satisfying the linear differential-functional relationships 
\begin{equation}
D_{t}f^{(j+1)}=-\lambda f^{(j)},  \label{r10}
\end{equation}%
where, by definition, 
\begin{equation}
f^{(j+1)}:=f^{(j+1)}(\lambda )=\sum\limits_{n\in \mathbb{Z}%
_{+}}f_{n}^{(j+1)}\lambda ^{-n}  \label{r11}
\end{equation}%
for $\lambda \in \mathbb{R}\backslash \{0\}$ and $j\in \mathbb{Z}_{+}.$
\end{lemma}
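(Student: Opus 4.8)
The plan is to analyze the condition $D_t\,\xi = 0$ for a general element $\xi = \sum_{j,n}\lambda^{-(n+j)} f_n^{(j+1)} D_t^j D_x^n u \in R\{u\}$ and extract the claimed recursion. First I would apply $D_t$ termwise to the defining expansion \eqref{r09}. Using the Leibniz rule, $D_t\bigl(f_n^{(j+1)} D_t^j D_x^n u\bigr) = (D_t f_n^{(j+1)})\,D_t^j D_x^n u + f_n^{(j+1)}\,D_t^{j+1} D_x^n u$. The second batch of terms raises the $D_t$-order from $j$ to $j+1$, so after relabelling the summation index it can be absorbed into the first batch with a shifted coefficient. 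Concretely, matching the coefficient of the monomial $D_t^{j} D_x^{n} u$ in $D_t\xi$ forces the relation $\lambda^{-(n+j)} D_t f_n^{(j+1)} + \lambda^{-(n+j-1)} f_n^{(j)} = 0$ for all admissible $j,n$, i.e. $D_t f_n^{(j+1)} = -\lambda\, f_n^{(j)}$ coefficientwise in the powers of $\lambda$; the crucial point is that the monomials $\{D_t^j D_x^n u\}$ are differentially independent in $\mathcal{K}\{u\}$ at $N=1$ (there is no constraint \eqref{r01} collapsing them), so the vanishing of $D_t\xi$ is equivalent to the vanishing of each coefficient separately.

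Next I would assemble these scalar relations into the generating-function form \eqref{r11}. Summing $D_t f_n^{(j+1)} = -\lambda f_n^{(j)}$ against $\lambda^{-n}$ over $n\in\mathbb{Z}_+$ and invoking the definition $f^{(j+1)}(\lambda) = \sum_n f_n^{(j+1)}\lambda^{-n}$, one gets $D_t f^{(j+1)} = -\lambda \sum_n f_n^{(j)}\lambda^{-n} = -\lambda f^{(j)}(\lambda)$, which is precisely \eqref{r10}. Conversely, if a family $\{f^{(j)}\}_{j\in\mathbb{Z}_+}\subset \mathbb{R}\{\{x,t\}\}$ satisfies \eqref{r10}, then reading the argument backwards shows $\sum_{j,n}\lambda^{-(n+j)} f_n^{(j+1)} D_t^j D_x^n u$ lies in $\mathrm{Ker}\,D_t$, so the correspondence is exact and $\mathrm{Ker}\,D_t\subset R\{u\}$ is generated exactly by such families. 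I would also note that the $D_x^n u$ factors play no role in the kernel computation since $D_t$ acts on them by the commutator identity \eqref{r04}, which only reshuffles terms of the same type; thus the telescoping that produces \eqref{r10} is governed entirely by the coefficient functions $f_n^{(j+1)}\in\mathcal{K}$.

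The main obstacle, and the step deserving the most care, is justifying the coefficient-by-coefficient vanishing — that is, the differential independence of the monomials $D_t^j D_x^n u$ in $\mathcal{K}\{u\}$ and the legitimacy of rearranging the (formally infinite) double sum before and after applying $D_t$. For the independence one uses that $\mathcal{K}\{u\} = \mathcal{K}[\Theta u]$ is by construction a polynomial ring in the algebraically independent symbols $\Theta u$, together with the fact that at $N=1$ no relation of the form \eqref{r02} is imposed; for the rearrangement one works within the $\lambda$-adic (equivalently, grading) filtration, in which only finitely many terms contribute to each homogeneous component, so the manipulations are term-by-term legitimate. Once these two points are settled the identification of $\mathrm{Ker}\,D_t$ with the solution set of \eqref{r10} is immediate.
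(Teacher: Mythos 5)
The paper offers no proof of Lemma \ref{Lm_01}: it is asserted as ``evidently'' characterizing the ideal (\ref{r09}), so there is no authorial argument to compare yours against line by line. Your reconstruction --- apply $D_{t}$ termwise by Leibniz, reindex the terms $f_{n}^{(j+1)}D_{t}^{j+1}D_{x}^{n}u$, match coefficients of the monomials $D_{t}^{j}D_{x}^{n}u$ to obtain $D_{t}f_{n}^{(j+1)}=-\lambda f_{n}^{(j)}$ (with the $j=0$ stratum giving $D_{t}f_{n}^{(1)}=0$), and then resum against $\lambda ^{-n}$ into the generating functions (\ref{r11}) --- is the natural argument, is essentially correct, and is consistent with how the lemma is later instantiated in (\ref{r24}), (\ref{r36}) and (\ref{r49}). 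Two points in your write-up deserve more care. First, the $\mathcal{K}$-linear independence of the family $\{D_{t}^{j}D_{x}^{n}u\}$ does \emph{not} follow directly from the algebraic independence of the symbols $\Theta u$, because $D_{t}=\partial /\partial t+u\,\partial /\partial x$ is not an element of the monoid $\Theta $: the element $D_{t}^{j}D_{x}^{n}u$ is a genuinely nonlinear polynomial in the jet variables. What rescues the step is that its homogeneous component of degree one in those variables is exactly $\partial _{t}^{j}\partial _{x}^{n}u$, so distinct pairs $(j,n)$ have distinct leading linear parts and independence over $\mathcal{K}$ follows by looking at that component; you should say this rather than cite the polynomial-ring structure alone. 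Second, your appeal to the commutator (\ref{r04}) is unnecessary and slightly misleading: since $D_{t}^{j}D_{x}^{n}u$ means $D_{t}$ applied $j$ times to $D_{x}^{n}u$, one has $D_{t}(D_{t}^{j}D_{x}^{n}u)=D_{t}^{j+1}D_{x}^{n}u$ by definition, with no reshuffling at all; the commutator would only enter if one tried to push $D_{t}$ past $D_{x}$, which your computation never does.
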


Taking now into account the invariant reduction of the differential ideal (%
\ref{r09}) subject to the condition $D_{t}u=0$ to the ideal 
\begin{equation}
R^{(1)}\{u\}:=\{\sum\limits_{n\in \mathbb{Z}_{+}}\lambda
^{-n}f_{n}^{(1)}D_{x}^{n}u\in \mathcal{R}\{u\}:\ D_{t}u=0,\ f_{n}^{(1)}\in 
\mathcal{K},\ n\in \mathbb{Z}_{+}\}  \label{r12}
\end{equation}%
one can simultaneously reduce the hierarchy of relationships (\ref{r10}) to
the simple expression

\begin{equation}
D_{t}f^{(1)}(\lambda )=0,  \label{r13}
\end{equation}%
where $f^{(1)}\in \mathcal{K}_{1}\{u\}:=\mathcal{K}\{u\}\}|_{D_{t}u=0}.$
Now, to formulate the Lax type integrability condition for the case $N=1,$
it is necessary to construct the corresponding $D_{t}$-invariant Lax
differential ideal 
\begin{equation}
L^{(1)}\{u\}:=\left\{ g_{1}f^{(1)}(\lambda )\in \mathcal{K}_{1}\{u\}:\
D_{t}f^{(1)}(\lambda )=0,\ g_{1}\in \mathcal{K}\right\}  \label{r14}
\end{equation}%
and to check its invariance subject to the differentiation $D_{x}:\mathcal{K}%
_{1}\{u\}\rightarrow \mathcal{K}_{1}\{u\}.$ Namely, the following lemma
holds.

\begin{lemma}
\label{Lm_2} The Lax differential ideal (\ref{r14}) is $D_{x}$-invariant, if
the linear equality 
\begin{equation}
D_{x}f^{(1)}(\lambda )=(\lambda u_{x}+\mu u_{x}^{-2}u_{xx})f^{(1)}(\lambda )
\label{r15}
\end{equation}%
holds for arbitrary parameters $\lambda ,\mu \in \mathbb{R},$ where the
subscript $"x"$ means the usual $D_{x}$-differentiation.
\end{lemma}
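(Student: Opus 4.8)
The plan is to deduce the $D_{x}$-invariance of the Lax ideal (\ref{r14}) from the Leibniz rule alone, reducing it to the requirement that $D_{x}f^{(1)}(\lambda)$ be a $\mathcal{K}$-multiple of the generator $f^{(1)}(\lambda)$, and then to pin that multiplier down by demanding consistency with the constraint (\ref{r13}) via the commutator identity (\ref{r04}).

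First I would use that $D_{x}$ acts as a derivation on $\mathcal{K}_{1}\{u\}$, so for an arbitrary element $g_{1}f^{(1)}(\lambda)\in L^{(1)}\{u\}$, with $g_{1}\in\mathcal{K}$ and $D_{t}f^{(1)}(\lambda)=0$, one has
\[
D_{x}\bigl(g_{1}f^{(1)}(\lambda)\bigr)=(D_{x}g_{1})\,f^{(1)}(\lambda)+g_{1}\,D_{x}f^{(1)}(\lambda).
\]
The first term already belongs to $L^{(1)}\{u\}$ since $D_{x}g_{1}\in\mathcal{K}$, hence $D_{x}L^{(1)}\{u\}\subset L^{(1)}\{u\}$ precisely when $D_{x}f^{(1)}(\lambda)=h\,f^{(1)}(\lambda)$ for some $h$ in (a localization at $u_{x}$ of) $\mathcal{K}$; and if (\ref{r15}) holds with $h=\lambda u_{x}+\mu u_{x}^{-2}u_{xx}$, the remaining term equals $g_{1}h\,f^{(1)}(\lambda)\in L^{(1)}\{u\}$, which is the asserted invariance.

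Second, it must still be checked that (\ref{r15}) can actually be imposed, i.e.\ that it is compatible with $D_{t}f^{(1)}(\lambda)=0$. Applying $D_{t}$ to $D_{x}f^{(1)}(\lambda)=h\,f^{(1)}(\lambda)$, rewriting $D_{t}D_{x}=D_{x}D_{t}-u_{x}D_{x}$ from (\ref{r04}) and using $D_{t}f^{(1)}(\lambda)=0$, one obtains $(D_{t}h+u_{x}h)\,f^{(1)}(\lambda)=0$, so $h$ must satisfy the linear relation $D_{t}h+u_{x}h=0$, which is exactly the integrability condition for the overdetermined pair $D_{x}f^{(1)}(\lambda)=h\,f^{(1)}(\lambda)$, $D_{t}f^{(1)}(\lambda)=0$. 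On $\mathcal{K}_{1}\{u\}$ the Riemann constraint reads $D_{t}u=0$, whence (again via (\ref{r04})) $D_{t}u_{x}=-u_{x}^{2}$ and $D_{t}u_{xx}=-3u_{x}u_{xx}$; substituting $h=\lambda u_{x}+\mu u_{x}^{-2}u_{xx}$ one verifies term by term that $D_{t}h=-\lambda u_{x}^{2}-\mu u_{x}^{-1}u_{xx}=-u_{x}h$ for all real $\lambda,\mu$, so the two-parameter family in (\ref{r15}) is genuinely self-consistent.

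The Leibniz computation and the evaluation of $D_{t}u_{x}$ and $D_{t}u_{xx}$ are routine; the delicate point is the compatibility step, since one is placing two first-order relations, (\ref{r15}) and (\ref{r13}), on the single quantity $f^{(1)}(\lambda)$, and one must verify that they do not over-determine it — which is why the equation $D_{t}h+u_{x}h=0$ has to be solved rather than the ansatz $D_{x}f^{(1)}=h\,f^{(1)}$ merely posited. (That $\lambda u_{x}$ and $\mu u_{x}^{-2}u_{xx}$ exhaust the solutions of $D_{t}h+u_{x}h=0$ in the pertinent function class is not required by the statement, which asserts only sufficiency.)
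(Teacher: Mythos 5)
Your proof is correct and follows essentially the paper's route: both arguments reduce the $D_{x}$-invariance of the ideal to the compatibility condition $D_{t}h+h\,D_{x}u=0$ on the multiplier $h$ in $D_{x}f^{(1)}(\lambda)=h\,f^{(1)}(\lambda)$, obtained from the commutator (\ref{r04}) and $D_{t}f^{(1)}(\lambda)=0$. The only difference is that you verify directly that $h=\lambda u_{x}+\mu u_{x}^{-2}u_{xx}$ solves this equation (which suffices, since the lemma asserts only sufficiency), whereas the paper derives it by the antiderivative ansatz $l_{1}=D_{x}a_{1}$ with $a_{1}=\lambda x+\mu/D_{x}u_{1}$; your explicit Leibniz-rule justification of why a scalar multiplier yields ideal invariance is a point the paper leaves implicit.
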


\begin{proof}
The condition (\ref{r15}) makes the Lax ideal (\ref{r14}) to be also $D_{x}$%
-invariant if the following condition on the linear representation of the $%
D_{x}$-differentiation 
\begin{equation}
D_{x}f^{(1)}(\lambda )=l_{1}[u;\lambda ]f^{(1)}(\lambda )  \label{r16}
\end{equation}%
holds: 
\begin{equation}
D_{t}l_{1}[u;\lambda ]+l_{1}[u;\lambda ]D_{x}u_{1}=0  \label{r17}
\end{equation}%
for any parameter $\lambda \in \mathbb{R}.$ Differential-functional equation
(\ref{r17}) upon substitution 
\begin{equation}
l_{1}[u;\lambda ]:=D_{x}a_{1}[u;\lambda ]  \label{r18}
\end{equation}%
reduces easily to the equation 
\begin{equation}
D_{t}a_{1}[u;\lambda ]=\lambda  \label{r19}
\end{equation}%
for any parameter $\lambda \in \mathbb{R}.$ Taking into account the
condition $D_{t}u_{1}=0$ one can easily find that 
\begin{equation}
a_{1}[u;\lambda ]=\lambda x+\mu /D_{x}u_{1},  \label{r20}
\end{equation}%
where $\mu \in \mathbb{R}$ is arbitrary. Having now substituted expression (%
\ref{r20}) into (\ref{r18}) and taking into account that $D_{t}x=u_{1},$ the
result (\ref{r15}) follows.
\end{proof}

Thereby, having naturally extended the differential ring $\mathcal{K}$ to
the ring $\mathbb{C}\{\{x,t\}\},$ one can formulate the following
proposition.

\begin{proposition}
\label{Prop_1} The Lax type representation for the Riemann hydrodynamical
equation 
\begin{equation}
D_{t}u=0  \label{r21}
\end{equation}%
is given by a set of the linear compatible equations 
\begin{equation}
D_{t}f^{(1)}(\lambda )=0,\ \ D_{x}f^{(1)}(\lambda )=(\lambda u_{x}+\mu
u_{x}^{-2}u_{xx})f^{(1)}(\lambda )  \label{r22}
\end{equation}%
for $f^{(1)}(\lambda )\in L_{\infty }(\mathbb{R}^{2};\mathbb{C})$ and for
any $\lambda ,\mu \in \mathbb{C}.$
\end{proposition}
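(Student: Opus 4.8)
The plan is to treat Proposition~\ref{Prop_1} as the assembly of Lemmas~\ref{Lm_01} and~\ref{Lm_2}: once ``Lax type representation'' is read as the assertion that the Lax differential ideal $L^{(1)}\{u\}$ of (\ref{r14}) is invariant under both $D_t$ and $D_x$ --- equivalently, that the overdetermined linear system (\ref{r22}) for $f^{(1)}$ is compatible on the manifold $M_1$ cut out by (\ref{r21}) --- the content of the two lemmas already delivers everything, and it remains only to record the equivalence with the Riemann flow itself.

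Concretely, I would first write the two equations of (\ref{r22}) as $D_t f^{(1)}(\lambda)=q_1 f^{(1)}(\lambda)$ with $q_1\equiv 0$ and $D_x f^{(1)}(\lambda)=l_1[u;\lambda] f^{(1)}(\lambda)$, where $l_1[u;\lambda]:=\lambda u_x+\mu u_x^{-2}u_{xx}$ is a scalar element of $\mathrm{End}\,\mathbb{C}$. Cross-differentiating $f^{(1)}$ and inserting the commutation relation (\ref{r04}), $[D_x,D_t]=(D_x u_1)D_x$, the consistency identity $D_x(D_t f^{(1)})-D_t(D_x f^{(1)})=[D_x,D_t]f^{(1)}$ collapses, after cancelling the factor $f^{(1)}$, to the scalar zero-curvature condition (\ref{r17}), namely $D_t l_1[u;\lambda]+l_1[u;\lambda]D_x u_1=0$; this is precisely the obstruction to $D_x$-invariance of $L^{(1)}\{u\}$.

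Next I would invoke Lemma~\ref{Lm_2} together with its proof: the substitution (\ref{r18}), $l_1=D_x a_1$, turns (\ref{r17}) into the elementary equation (\ref{r19}), $D_t a_1[u;\lambda]=\lambda$; on $M_1$, where $D_t u_1=0$, this integrates to (\ref{r20}), $a_1=\lambda x+\mu/D_x u_1$ with $\mu$ free, and applying $D_x$ (using $D_t x=u_1$) reproduces exactly the coefficient appearing in (\ref{r15}) and in the second equation of (\ref{r22}). Hence (\ref{r17}) is an identity on $M_1$ for every $\lambda,\mu$, so $L^{(1)}\{u\}$ is $D_x$-invariant; it is $D_t$-invariant trivially, since the first equation of (\ref{r22}) is just the reduction (\ref{r13}) of the kernel relations (\ref{r10}) of Lemma~\ref{Lm_01} under $D_t u=0$, whence $D_t(g_1 f^{(1)})=(D_t g_1)f^{(1)}\in L^{(1)}\{u\}$. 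Finally, extending scalars from $\mathcal{K}=\mathbb{R}\{\{x,t\}\}$ to $\mathbb{C}\{\{x,t\}\}$ allows $\lambda,\mu\in\mathbb{C}$ and makes $L_\infty(\mathbb{R}^2;\mathbb{C})$ the natural representation space of the scalar operator $l_1[u;\lambda]$, which is the statement as formulated.

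The point deserving the most care is to make the equivalence, and not merely the one-way implication, precise: one has to verify that back-substituting (\ref{r15}) into (\ref{r17}) and simplifying returns the single constraint $D_t u=0$ and nothing further on $u$, and that the whole one-parameter $\mu$-family of second equations in (\ref{r22}) is simultaneously admissible (the parameter $\mu$ being the non-uniqueness in solving (\ref{r19})). A secondary, more technical, issue is transporting the formal germ-ring identities to genuine solutions $f^{(1)}\in L_\infty(\mathbb{R}^2;\mathbb{C})$ on the $2\pi$-periodic manifold, which forces one to interpret the coefficient $\lambda u_x+\mu u_x^{-2}u_{xx}$ on the locus $u_x\neq 0$; both are routine but should be acknowledged.
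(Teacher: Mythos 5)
Your proposal is correct and follows essentially the same route as the paper: Proposition \ref{Prop_1} is obtained there by combining Lemma \ref{Lm_01} (the $D_t$-kernel relation reducing to $D_tf^{(1)}=0$) with Lemma \ref{Lm_2}, whose proof uses exactly your steps — the zero-curvature condition (\ref{r17}), the substitution $l_1=D_xa_1$, the integration $a_1=\lambda x+\mu/D_xu_1$ on $D_tu_1=0$, and the identity $D_tx=u_1$ to recover (\ref{r15}). Your additional remarks on the converse direction and on the locus $u_x\neq 0$ go slightly beyond what the paper records, but do not change the argument.
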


\subsection{The case $N=2$}

For the case $N=2$ the respectively reduced invariant Riemann differential
ideal \ (\ref{r09}) is given by the set 
\begin{eqnarray}
R^{(2)}\{u\} &:&=\{\sum\limits_{j=\overline{0,1}}\sum\limits_{n\in \mathbb{Z}%
_{+}}\lambda ^{-(n+j)}f_{n}^{(j+1)}D_{t}^{j}D_{x}^{n}u:D_{t}^{2}u=0,\  
\notag \\
\ f_{n}^{(j+1)} &\in &\mathcal{K},\ n\in \mathbb{Z}_{+},\ j=\overline{0,1}\}
\label{r23}
\end{eqnarray}%
and characterized by the kernel $\mathrm{Ker}D_{t}\subset R^{(2)}\{u\}$ of
the $D_{t}$-differentiation, which is generated by the following linear
differential relationships: 
\begin{equation}
D_{t}f^{(1)}(\lambda )=0,\ \ D_{t}f^{(2)}(\lambda )=-\lambda f^{(1)}(\lambda
),  \label{r24}
\end{equation}%
where $f^{(j)}(\lambda )\in \mathcal{K}_{2}\{u\}:=\mathcal{K}%
\{u\}|_{D_{t}^{2}u=0},$ $j=\overline{0,1},$ and $\lambda \in \mathbb{R}$ is
an arbitrary parameter.

The condition (\ref{r24}) can be rewritten in a compact vector form as 
\begin{equation}
D_{t}f(\lambda )=q_{2}(\lambda )f(\lambda ),\ \ q_{2}(\lambda ):=\left( 
\begin{array}{cc}
0 & 0 \\ 
-\lambda & 0%
\end{array}%
\right) ,  \label{r25}
\end{equation}%
where $f(\lambda ):=(f^{(1)}(\lambda ),f^{(2)}(\lambda ))^{\intercal }\in 
\mathcal{K}_{2}\{u\}^{2}.$

Now we can construct the invariant Lax differential ideal 
\begin{eqnarray}
L^{(2)}\{u\} &:&=\{\sum\limits_{j=1}^{2}g_{j}f^{(j)}(\lambda )\in \mathcal{K}%
_{2}\{u\}:\ D_{t}f^{(1)}(\lambda )=0,\ \   \label{r26} \\
D_{t}f^{(2)}(\lambda ) &=&-\lambda f^{(1)}(\lambda ),g_{j}\in \mathcal{K},\
j=\overline{1,2}\}  \notag
\end{eqnarray}%
and to check its invariance subject to a linear representation of the $D_{x}$%
-differentiation in the form: 
\begin{equation}
D_{x}f(\lambda )=l_{2}[u;\lambda ]f(\lambda )  \label{r27}
\end{equation}%
for some matrix $l_{2}[u;\lambda ]\in \mathrm{End}$ $\mathbb{R}^{2}.$

The following lemma holds.

\begin{lemma}
The Lax differential ideal (\ref{r26}) is $D_{x}$-invariant if the matrix 
\begin{equation}
l_{2}[u;\lambda ]=\left( 
\begin{array}{cc}
\lambda u_{1,x} & u_{2,x} \\ 
-2\lambda ^{2} & -\lambda u_{1,x}%
\end{array}%
\right) .  \label{r28}
\end{equation}
\end{lemma}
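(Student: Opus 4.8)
The plan is to mimic exactly the proof strategy used for the $N=1$ case in Lemma~\ref{Lm_2}, now carried out for the two-component vector $f(\lambda)=(f^{(1)}(\lambda),f^{(2)}(\lambda))^{\intercal}$. First I would observe that $D_{x}$-invariance of the Lax ideal $L^{(2)}\{u\}$ amounts to requiring that the compatibility (zero-curvature) condition for the pair
\begin{equation}
D_{t}f(\lambda)=q_{2}(\lambda)f(\lambda),\qquad D_{x}f(\lambda)=l_{2}[u;\lambda]f(\lambda)
\end{equation}
holds on the manifold $M_{2}$. Differentiating the second relation by $D_{t}$, the first by $D_{x}$, and using the commutator identity $[D_{x},D_{t}]=(D_{x}u_{1})D_{x}$ from \eqref{r04} together with $q_{2}(\lambda)$ being $u$-independent, one is led to the matrix differential-functional equation
\begin{equation}
D_{t}l_{2}[u;\lambda]+l_{2}[u;\lambda]\,D_{x}u_{1}=[q_{2}(\lambda),l_{2}[u;\lambda]]+D_{x}q_{2}(\lambda),
\end{equation}
which, since $D_{x}q_{2}(\lambda)=0$, reduces to $D_{t}l_{2}+l_{2}D_{x}u_{1}=[q_{2},l_{2}]$.

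Next I would make the substitution $l_{2}[u;\lambda]=D_{x}a_{2}[u;\lambda]$ for a matrix-valued $a_{2}[u;\lambda]$, in analogy with \eqref{r18}, which converts the equation into a first-order one for $a_{2}$; using $D_{t}u_{j+1}=u_{j+2}$ (i.e. $D_{t}u_{1}=u_{2}$, $D_{t}u_{2}=0$) and $D_{t}x=u_{1}$, the entries of $a_{2}$ can be integrated explicitly. The diagonal entries should come out proportional to $\lambda x$ (mirroring $a_{1}=\lambda x+\mu/D_{x}u_{1}$), the $(1,2)$ entry involving $u_{2}$, and the $(2,1)$ entry involving $-2\lambda^{2}x$; after applying $D_{x}$ and simplifying with $D_{t}D_{x}u_{1}$-type relations one recovers precisely the matrix \eqref{r28}. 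Finally I would close the loop by verifying directly that this $l_{2}[u;\lambda]$ together with $q_{2}(\lambda)$ satisfies the reduced equation $D_{t}l_{2}+l_{2}D_{x}u_{1}=[q_{2},l_{2}]$ identically on $M_{2}$, which confirms that each generator $g_{j}f^{(j)}(\lambda)$ of $L^{(2)}\{u\}$ is mapped by $D_{x}$ back into the ideal.

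The main obstacle I anticipate is bookkeeping in the integration step: unlike the scalar case, here one must solve a coupled $2\times2$ system of differential-functional equations, and choosing the right ansatz (and discarding the freely available $\mu$-type homogeneous solutions so as to land on the clean form \eqref{r28}) requires care. A secondary subtlety is making rigorous the passage between "$D_{x}$-invariance of the ideal" and "the pointwise zero-curvature relation for $l_{2}$", since $f^{(j)}(\lambda)$ are formal series in $\lambda^{-1}$ with coefficients in $\mathcal{K}$; but this is handled exactly as in the $N=1$ argument, reading off coefficients of powers of $\lambda$, so no essentially new idea is needed beyond the computation.
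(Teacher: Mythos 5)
Your proposal follows essentially the same route as the paper: derive the compatibility equation $D_{t}l_{2}+l_{2}D_{x}u_{1}=[q_{2},l_{2}]$, substitute $l_{2}=D_{x}a_{2}$ to reduce it to $D_{t}a_{2}=[q_{2},a_{2}]$, and integrate using the flow equations (the paper additionally exploits $D_{t}^{2}a_{2}=k_{2}q_{2}$, $D_{t}^{3}a_{2}=0$ to organize that integration). One small correction to your anticipated ansatz: the diagonal entries of $a_{2}$ come out as $\pm\lambda u_{1}$ (so that $D_{x}$ produces $\pm\lambda u_{1,x}$), not multiples of $\lambda x$ --- only the $(2,1)$ entry carries the $-2\lambda^{2}x$ term --- but your planned final direct verification of the reduced equation would catch this.
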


\begin{proof}
The $D_{x}$-invariant condition for the Lax ideal (\ref{r26}) forces the
matrix $l_{2}[u;\lambda ]\in \mathrm{End}$ $\mathbb{R}^{2}$ to satisfy the
differential-functional relationship 
\begin{equation}
D_{t}l_{2}+l_{2}D_{x}u_{1}=[q_{2},l_{2}],  \label{r29}
\end{equation}%
where we have put by definition $l_{2}:=l_{2}[u;\lambda ],$ $%
q_{2}:=q_{2}(\lambda )\in \mathrm{End}$ $\mathbb{R}^{2}$. Making the
substitution 
\begin{equation}
l_{2}=D_{x}a_{2}  \label{r30}
\end{equation}%
for some matrix $a_{2}:=a_{2}[u;\lambda ]\in \mathrm{End}$ $\mathbb{R}^{2},$
we can easily reduce equation (\ref{r29}) to the equivalent one in the form 
\begin{equation}
D_{t}a_{2}=[q_{2},a_{2}].  \label{r31}
\end{equation}%
To solve equation (\ref{r31}) it is useful to take into account that 
\begin{equation}
D_{t}^{2}a_{2}=k_{2}q_{2},\ \ \ \ D_{t}^{3}a_{2}=0  \label{r32}
\end{equation}%
for some constant $k_{2}\in \mathbb{R}.$ Taking into account the
relationship (\ref{r32}) one easily obtains the exact matrix representation 
\begin{equation}
a_{2}=\left( 
\begin{array}{cc}
\lambda u_{1} & u_{2} \\ 
-2\lambda ^{2}x & -\lambda u_{1}%
\end{array}%
\right) ,  \label{r33}
\end{equation}%
entailing the result (\ref{r28})
\end{proof}

Thus, having as above extended the differential ring $\mathcal{K}$ to the
ring $\mathbb{C}\{\{x,t\}\},$ one can formulate the next proposition.

\begin{proposition}
\label{Prop_2} The Lax representation for the Riemann type hydrodynamical
system 
\begin{equation}
D_{t}u_{1}=u_{2},\ \ \ \ D_{t}u_{2}=0  \label{r34}
\end{equation}%
is given by a set of the linear compatible equations 
\begin{equation}
D_{t}f(\lambda )=q_{2}(\lambda ):=\left( 
\begin{array}{cc}
0 & 0 \\ 
-\lambda & 0%
\end{array}%
\right) f(\lambda ),\ \ \ D_{x}f(\lambda )=\left( 
\begin{array}{cc}
\lambda u_{1,x} & u_{2,x} \\ 
-2\lambda ^{2} & -\lambda u_{1,x}%
\end{array}%
\right) f(\lambda )  \label{r34_1}
\end{equation}%
for $f(\lambda )\in L_{\infty }(\mathbb{R}^{2};\mathbb{C}^{2})$ and
arbitrary complex parameter $\lambda \in \mathbb{C}.$
\end{proposition}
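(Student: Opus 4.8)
The plan is to verify that the two linear equations in (\ref{r34_1}) are compatible, i.e. that their zero-curvature condition is precisely equivalent to the dynamical system (\ref{r34}), and conversely that on solutions of (\ref{r34}) the ideal construction guarantees the existence of a common solution $f(\lambda)$. Concretely, I would proceed by reversing the chain of reductions already set up in the excerpt: starting from the invariant Riemann differential ideal $R^{(2)}\{u\}$ of (\ref{r23}), Lemma (characterizing $\mathrm{Ker}\,D_{t}$) gives the $t$-equation in the compact matrix form (\ref{r25}), and the immediately preceding lemma establishes that the Lax ideal $L^{(2)}\{u\}$ of (\ref{r26}) is $D_{x}$-invariant exactly when $l_{2}[u;\lambda]$ has the form (\ref{r28}). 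Thus the pair (\ref{r34_1}) is nothing but the pair $(D_{t}f = q_{2}f,\ D_{x}f = l_{2}f)$ with the established matrices.

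The core of the proof is then the compatibility check. I would compute the commutator of the two operators $D_{t}-q_{2}$ and $D_{x}-l_{2}$ acting on $f$, remembering that the differentiations themselves do not commute but satisfy (\ref{r04}), $[D_{x},D_{t}] = (D_{x}u_{1})D_{x}$. Applying $D_{x}$ to $D_{t}f = q_{2}f$ and $D_{t}$ to $D_{x}f = l_{2}f$, subtracting, and using (\ref{r04}) to handle the cross term, one lands exactly on the relation (\ref{r29}):
\begin{equation*}
D_{t}l_{2} + l_{2}D_{x}u_{1} = [q_{2},l_{2}].
\end{equation*}
So I would show that (\ref{r29}), with $q_{2}$ as in (\ref{r25}) and $l_{2}$ as in (\ref{r28}), holds identically as a consequence of (\ref{r34}) — this is the substance of the earlier lemma's proof, where the substitution $l_{2} = D_{x}a_{2}$ reduces it to $D_{t}a_{2} = [q_{2},a_{2}]$, solved explicitly by (\ref{r33}) using $D_{t}u_{1} = u_{2}$, $D_{t}u_{2} = 0$, and $D_{t}x = u_{1}$. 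Differentiating (\ref{r33}) in $x$ recovers (\ref{r28}), so compatibility is confirmed.

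For the converse direction — that compatibility of (\ref{r34_1}) forces (\ref{r34}) — I would read off the entries of (\ref{r29}). The $(1,1)$ and $(1,2)$ entries give $D_{t}u_{1,x} = u_{2,x}$ and $D_{t}u_{2,x}=0$ after accounting for the $l_{2}D_{x}u_{1}$ term, and integrating in $x$ (using the $2\pi$-periodicity and the natural normalization, or simply that the constants of integration are fixed by the leading-order structure) yields exactly $D_{t}u_{1} = u_{2}$ and $D_{t}u_{2} = 0$. Finally, having checked compatibility of the system in the extended ring $\mathbb{C}\{\{x,t\}\}$, a standard argument (Frobenius-type integrability for the overdetermined linear system, or simply the fact that a $D_{x}$- and $D_{t}$-invariant ideal admits a common generator) produces $f(\lambda)\in L_{\infty}(\mathbb{R}^{2};\mathbb{C}^{2})$ solving both equations for each $\lambda\in\mathbb{C}$.

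The step I expect to be the main obstacle is bookkeeping the noncommutativity of $D_{x}$ and $D_{t}$ correctly in the compatibility computation: the extra term $(D_{x}u_{1})D_{x}f = (D_{x}u_{1})l_{2}f$ arising from (\ref{r04}) is precisely what produces the $l_{2}D_{x}u_{1}$ summand in (\ref{r29}) rather than a pure zero-curvature equation $D_{t}l_{2} - D_{x}q_{2} = [q_{2},l_{2}]$, and getting the placement and sign of this term right (together with the fact that $q_{2}$ is $x$-independent, so $D_{x}q_{2}=0$) is the only genuinely delicate point; the remaining verifications are the routine matrix computations already carried out in the lemmas.
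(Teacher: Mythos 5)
Your proposal is correct and follows essentially the same route as the paper: the $t$-equation comes from the kernel of $D_{t}$ on the reduced Riemann ideal, the $x$-equation from the $D_{x}$-invariance condition $D_{t}l_{2}+l_{2}D_{x}u_{1}=[q_{2},l_{2}]$, which is solved exactly as in the paper's lemma via the substitution $l_{2}=D_{x}a_{2}$ and the explicit matrix $a_{2}$. Your added compatibility computation, converse direction, and Frobenius existence remark are elaborations the paper leaves implicit, but they do not change the argument.
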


\subsection{The case $N=3$}

Similarly to the above, for the case $N=3$ the respectively reduced
invariant Riemann differential ideal \ (\ref{r09}) is given by the set 
\begin{eqnarray}
R^{(3)}\{u\} &:&=\{\sum\limits_{j=\overline{0,2}}\sum\limits_{n\in \mathbb{Z}%
_{+}}\lambda ^{-(j+n)}f_{n}^{(j+1)}D_{t}^{j}D_{x}^{n}u:\ D_{t}^{3}u=0, 
\notag \\
\ f_{n}^{(j+1)} &\in &\mathcal{K},\ n\in \mathbb{Z}_{+},\ j=\overline{0,2}\}
\label{r35}
\end{eqnarray}%
and characterized by the kernel $\mathrm{Ker}$ $D_{t}\subset R^{(3)}\{u\},$
generated by the following differential relationships: 
\begin{eqnarray}
&&D_{t}f^{(1)}(\lambda )=0,\ \ \ D_{t}f^{(2)}(\lambda )=-\lambda
f^{(1)}(\lambda ),  \label{r36} \\
&&D_{t}f^{(3)}(\lambda )=-\lambda f^{(2)}(\lambda ),  \notag
\end{eqnarray}%
where $f^{(j)}(\lambda )\in \mathcal{K}_{3}\{u\}:=\mathcal{K}%
\{u\}|_{D_{t}^{3}u=0},$ $j=\overline{0,2},$ and $\lambda \in \mathbb{R}$ is
an arbitrary parameter. The system (\ref{r36}) can be equivalently rewritten
in the matrix form 
\begin{equation}
D_{t}f(\lambda )=q_{3}(\lambda )f(\lambda ),\ \ q_{3}(\lambda )=\left( 
\begin{array}{ccc}
0 & 0 & 0 \\ 
-\lambda & 0 & 0 \\ 
0 & -\lambda & 0%
\end{array}%
\right) ,  \label{r37}
\end{equation}%
where $f(\lambda ):=(f^{(1)},f^{(2)},f^{(3)}(\lambda ))^{\intercal }\in 
\mathcal{K}_{3}\{u\}^{3}.$

Now we can construct the corresponding invariant Lax differential ideal 
\begin{eqnarray}
L_{3}\{u\} &:&=\{\sum\limits_{j=1}^{3}g_{j}f^{(j)}(\lambda )\in \mathcal{K}%
_{3}\{u\}:D_{t}f^{(1)}(\lambda )=0,\ \ \ D_{t}f^{(2)}(\lambda )=-\lambda
f^{(1)}(\lambda ),  \label{r38} \\
\ D_{t}f^{(3)}(\lambda ) &=&-\lambda f^{(2)}(\lambda ),g_{j}\in \mathcal{K}%
,\ j=\overline{1,3}\},  \notag
\end{eqnarray}%
whose $D_{x}$-invariance is looked for in the linear matrix form 
\begin{equation}
D_{x}f(\lambda )=l_{3}[u;\lambda ]f(\lambda )  \label{r39}
\end{equation}%
for some matrix $l_{3}[u;\lambda ]\in \mathrm{End}$ $\mathbb{R}^{3}.$ The
latter satisfies the differential-functional equation 
\begin{equation}
D_{t}l_{3}+l_{3}D_{x}u_{1}=[q_{3},l_{3}],  \label{r40}
\end{equation}%
where we put, by definition, $l_{3}:=l_{3}[u;\lambda ],$ $%
q_{3}:=q_{3}(\lambda )\in \mathrm{End}$ $\mathbb{R}^{3}.$ Making use of the
derivative representation 
\begin{equation}
l_{3}=D_{x}a_{3},  \label{r41}
\end{equation}%
where $a_{3}:=a_{3}[u;\lambda ]\in \mathrm{End}$ $\mathbb{R}^{2},$ equation (%
\ref{r40}) reduces to an equivalent one in the form 
\begin{equation}
D_{t}a_{3}=[q_{3},a_{3}].  \label{r42}
\end{equation}%
To solve matrix equation (\ref{r42}), we take into account that 
\begin{equation}
D_{t}^{3}a_{3}=k_{3}q_{3},\ \ D_{t}^{4}a_{3}=0  \label{r43}
\end{equation}%
for some constant $k_{3}\in \mathbb{R}.$ Based both on (\ref{r43}) and on
the component wise form of (\ref{r42}) one easily finds that 
\begin{equation}
a_{3}[u;\lambda ]=\left( 
\begin{array}{ccc}
\lambda u_{2} & u_{3} & 0 \\ 
0 & \lambda u_{2} & 2u_{3} \\ 
-3\lambda ^{3}x & -3\lambda ^{2}u_{1} & -2\lambda u_{2}%
\end{array}%
\right) ,  \label{r44}
\end{equation}%
entailing the matrix 
\begin{equation}
l_{3}[u;\lambda ]=\left( 
\begin{array}{ccc}
\lambda u_{2,x} & u_{3,x} & 0 \\ 
0 & \lambda u_{2,x} & 2u_{3,x} \\ 
-3\lambda ^{3} & -3\lambda ^{2}u_{1,x} & -2\lambda u_{2,x}%
\end{array}%
\right) .  \label{r45}
\end{equation}%
Thereby, having naturally extended the differential ring $\mathcal{K}$ to
the ring $\mathbb{C}\{\{x,t\}\},$ one can formulate the next proposition.

\begin{proposition}
\label{Prop_3} The Lax representation for the Riemann type hydrodynamical
system 
\begin{equation}
D_{t}u_{1}=u_{2},\ \ D_{t}u_{2}=u_{3},\ \ D_{t}u_{3}=0  \label{r46}
\end{equation}%
is given by a set of linear compatible equations 
\begin{equation}
D_{t}f(\lambda )=\left( 
\begin{array}{ccc}
0 & 0 & 0 \\ 
-\lambda & 0 & 0 \\ 
0 & -\lambda & 0%
\end{array}%
\right) f(\lambda ),\ \ D_{x}f(\lambda )=\left( 
\begin{array}{ccc}
\lambda u_{2,x} & u_{3,x} & 0 \\ 
0 & \lambda u_{2,x} & 2u_{3,x} \\ 
-3\lambda ^{3} & -3\lambda ^{2}u_{1,x} & -2\lambda u_{2,x}%
\end{array}%
\right) f(\lambda ),  \label{r47}
\end{equation}%
where $f(\lambda )\in L_{\infty }(\mathbb{R}^{2};\mathbb{C}^{3})$ and $%
\lambda \in \mathbb{C}$ is an arbitrary complex parameter.
\end{proposition}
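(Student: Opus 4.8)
The plan is to repeat, for $N=3$, the three-step construction already carried out in the cases $N=1$ and $N=2$: first to locate the kernel of $D_{t}$ inside the reduced Riemann differential ideal $R^{(3)}\{u\}$ of (\ref{r35}), then to form the Lax differential ideal $L_{3}\{u\}$ generated by those kernel elements, and finally to impose $D_{x}$-invariance of $L_{3}\{u\}$, which will pin down the matrix $l_{3}[u;\lambda]$ and thereby exhibit the compatible linear system (\ref{r47}).

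First I would check that on the functional manifold $M_{3}$ (where $D_{t}^{3}u=0$) the ideal $R^{(3)}\{u\}$ is $D_{t}$-invariant and that $\mathrm{Ker}\,D_{t}$ inside it is generated exactly by the three functions $f^{(1)},f^{(2)},f^{(3)}$ obeying (\ref{r36}); assembling them into $f(\lambda)=(f^{(1)},f^{(2)},f^{(3)})^{\intercal}$ recasts (\ref{r36}) as the temporal equation $D_{t}f=q_{3}(\lambda)f$ of (\ref{r37}), with $q_{3}$ strictly lower-triangular and nilpotent of order three, so that $D_{t}^{3}f^{(3)}=0$ is automatic. Next, requiring the Lax ideal (\ref{r38}) to be closed under $D_{x}$ means $D_{x}f=l_{3}[u;\lambda]f$ for some $l_{3}\in\mathrm{End}\,\mathbb{R}^{3}$; applying $D_{t}$ to this relation and using the commutator identity (\ref{r04}) together with $D_{t}f=q_{3}f$ (and $D_{x}q_{3}=0$) produces the determining differential-functional equation (\ref{r40}), namely $D_{t}l_{3}+l_{3}(D_{x}u_{1})=[q_{3},l_{3}]$. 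The decisive step --- the same one that worked at $N=2$ --- is the substitution $l_{3}=D_{x}a_{3}$ of (\ref{r41}), which collapses (\ref{r40}) to the homogeneous matrix equation $D_{t}a_{3}=[q_{3},a_{3}]$ of (\ref{r42}). Because $q_{3}$ is nilpotent of order three, one has, as noted in (\ref{r43}), $D_{t}^{3}a_{3}=k_{3}q_{3}$ for a constant $k_{3}\in\mathbb{R}$ and $D_{t}^{4}a_{3}=0$; reading (\ref{r42}) entry by entry and integrating this finite Taylor-type recursion with the help of $D_{t}u_{1}=u_{2}$, $D_{t}u_{2}=u_{3}$, $D_{t}u_{3}=0$ and $D_{t}x=u_{1}$ determines $a_{3}$ up to the overall scale $k_{3}$, which is fixed by matching the $(3,1)$-entry; the outcome is (\ref{r44}), and $x$-differentiation gives $l_{3}[u;\lambda]$ as in (\ref{r45}).

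It then remains to verify the converse, that with this explicit pair $(l_{3},q_{3})$ the zero-curvature condition holds: since $q_{3}$ does not depend on $x$, compatibility of (\ref{r47}) is precisely $D_{t}l_{3}+l_{3}(D_{x}u_{1})-[q_{3},l_{3}]=0$, and one checks directly that this matrix identity is satisfied identically modulo $D_{t}u_{1}=u_{2}$, $D_{t}u_{2}=u_{3}$, $D_{t}u_{3}=0$, while conversely its vanishing forces exactly these three equations (\ref{r46}); a convenient way to organize this is to compare coefficients of $\lambda,\lambda^{2},\lambda^{3}$ in each matrix entry. I expect the only genuine difficulty to be bookkeeping rather than conceptual: one must track carefully the powers of $\lambda$ and the individual matrix entries so as to see that the overdetermined system formed by (\ref{r42}) together with (\ref{r43}) is in fact consistent and has the stated solution, unique up to scaling. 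Once this is done, extending the base ring $\mathcal{K}$ to $\mathbb{C}\{\{x,t\}\}$ lets $\lambda$ range over all of $\mathbb{C}$ and delivers the Lax type representation in the asserted form.
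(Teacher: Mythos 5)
Your proposal follows the paper's own derivation essentially verbatim: identify $\mathrm{Ker}\,D_{t}$ in $R^{(3)}\{u\}$ to get the temporal equation with $q_{3}(\lambda)$, impose $D_{x}$-invariance of the Lax ideal to obtain the determining equation (\ref{r40}), substitute $l_{3}=D_{x}a_{3}$ to reduce it to $D_{t}a_{3}=[q_{3},a_{3}]$, and integrate componentwise using (\ref{r43}) to arrive at (\ref{r44})--(\ref{r45}). The only addition is your explicit converse check of the zero-curvature condition, which the paper leaves implicit; this is harmless and the approach is the same.
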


As one can easily observe, the scheme of finding the Lax representation for
the Riemann type hydrodynamical equations at $N=\overline{1,3}$ can be
easily generalized for arbitrary $N\in \mathbb{Z}_{+},$ that is a topic of
the next section.

\subsection{The general case $N\in \mathbb{Z}_{+}$}

Consider at arbitrary $N\in \mathbb{Z}_{+}$ the constraint $D_{t}^{N}u=0$
and the respectively reduced invariant Riemann differential ideal \ (\ref%
{r09}), which is given by the set 
\begin{eqnarray}
R^{(N)}\{u\} &:&=\{\sum\limits_{j=\overline{0}}^{N-1}\sum\limits_{n\in 
\mathbb{Z}_{+}}\lambda ^{-(j+n)}f_{n}^{(j+1)}D_{t}^{j}D_{x}^{n}u\in \mathcal{%
K}\{u\}:\ D_{t}^{N}u=0.  \notag \\
f_{n}^{(j+1)} &\in &\mathbb{R}\{\{x,t\}\},\ n\in \mathbb{Z}_{+},\ j=%
\overline{0,N-1}\}.  \label{r48}
\end{eqnarray}%
The related kernel $\mathrm{Ker}D_{t}\subset R^{(N)}\{u\}$ of the $D_{t}$%
-differentiation is, owing to (\ref{r10}) and \ (\ref{r48}), generated by
the relationships 
\begin{equation}
D_{t}f^{(1)}(\lambda )=0,\ D_{t}f^{(2)}(\lambda )=-\lambda f^{(1)}(\lambda ),%
\text{ }...,\ D_{t}f^{(N)}(\lambda )=-\lambda f^{(N-1)}(\lambda ),
\label{r49}
\end{equation}%
which can be rewritten in a matrix form as 
\begin{equation}
D_{t}f(\lambda )=q_{N}(\lambda )f(\lambda ),\ \ q_{N}(\lambda )=\left( 
\begin{array}{ccccc}
0 & 0 & 0 & 0 & 0 \\ 
-\lambda & 0 & 0 & 0 & 0 \\ 
0 & -\lambda & \ddots & ... & 0 \\ 
0 & 0 & \ddots & 0 & 0 \\ 
0 & 0 & 0 & -\lambda & 0%
\end{array}%
\right) ,  \label{r50}
\end{equation}%
where, by definition, a vector $f(\lambda
):=(f^{(1)},f^{(2)},...,f^{(N)})^{\intercal }\in \mathcal{K}_{N}\{u\}^{N}:=%
\mathcal{K}\{u\}^{N}|_{D_{t}^{N}u=0}.$ The latter generates the invariant
Lax differential ideal 
\begin{equation}
L^{(N)}\{u\}:=\{\sum\limits_{j=1}^{N}g_{j}f^{(j)}(\lambda )\in \mathcal{K}%
_{N}\{u\}:\ g_{j}\in \mathcal{K},\ j=\overline{1,N}\},  \label{r51}
\end{equation}%
whose $D_{x}$-invariance holds, if its linear matrix representation in the
space $\ \mathcal{K}^{(N)}\{u\}^{N}$ 
\begin{equation}
D_{x}f(\lambda )=l_{N}[u;\lambda ]f(\lambda )  \label{r52}
\end{equation}%
is compatible with (\ref{r49}) for some matrix $l_{N}:=l_{N}[u;\lambda ]\in 
\mathrm{End}$ $\mathbb{R}^{N}$ and arbitrary $\lambda \in \mathbb{R}.$ As a
result we obtain the compatibility condition 
\begin{equation}
D_{t}l_{N}+l_{N}D_{x}u=[q_{N},l_{N}].  \label{r53}
\end{equation}%
Having represented the matrix $l_{N}\in \mathrm{End}$ $\mathbb{R}^{N}$ in
the derivative form 
\begin{equation}
l_{N}:=D_{x}a_{N},  \label{r54}
\end{equation}%
we can reduce the functional-differential equation (\ref{r53}) to 
\begin{equation}
D_{t}a_{N}=[q_{N},a_{N}].  \label{r55}
\end{equation}%
Now, taking into account that 
\begin{equation}
D_{t}^{N}a_{N}=k_{N}q_{N},\ \ D_{t}^{N+1}a_{N}=0  \label{r56}
\end{equation}%
for some constant $k_{N}\in \mathbb{R},$ one can easily obtain by means of
simple calculations the following solution to equation (\ref{r55}): 
\begin{equation}
a_{N}[u;\lambda ]=\left( 
\begin{array}{ccccc}
\lambda u_{N-1} & u_{N} & 0 & ... & 0 \\ 
0 & \lambda u_{N-1} & 2u_{N} & \ddots & ... \\ 
... & \ddots & \ddots & \ddots & 0 \\ 
0 & ... & 0 & \lambda u_{N-1} & (N-1)u_{N-1} \\ 
-N\lambda ^{N}x & -\lambda ^{N-1}Nu_{1} & ... & -\lambda ^{2}Nu_{N-2} & 
\lambda (1-N)u_{N-1}%
\end{array}%
\right) ,  \label{r57}
\end{equation}%
entailing the exact matrix expression 
\begin{equation}
l_{N}[u;\lambda ]=\left( 
\begin{array}{ccccc}
\lambda u_{N-1,x} & u_{N,x} & 0 & ... & 0 \\ 
0 & \lambda u_{N-1,x} & 2u_{N,x} & \ddots & ... \\ 
... & \ddots & \ddots & \ddots & 0 \\ 
0 & ... & 0 & \lambda u_{N-1,x} & (N-1)u_{N,x} \\ 
-N\lambda ^{N} & -\lambda ^{N-1}Nu_{1,x} & ... & -\lambda ^{2}Nu_{N-2,x} & 
\lambda (1-N)u_{N-1,x}%
\end{array}%
\right) .  \label{r58}
\end{equation}%
Thereby, having naturally extended the differential ring $\mathcal{K}$ to
the ring $\mathbb{C}\{\{x,t\}\}\},$ one can formulate the following general
proposition.

\begin{proposition}
\label{Prop_4} The Lax representation for the generalized Riemann type
hydrodynamical system 
\begin{equation}
D_{t}u_{1}=u_{2},\ D_{t}u_{2}=u_{3},...,\ D_{t}u_{N-1}=u_{N},\ D_{t}u_{N}=0
\label{r59}
\end{equation}%
is given for any arbitrary $N\in \mathbb{Z}_{+}$ by a set of linear
compatible equations 
\begin{eqnarray}
D_{t}f(\lambda ) &=&\left( 
\begin{array}{ccccc}
0 & 0 & 0 & 0 & 0 \\ 
-\lambda & 0 & 0 & 0 & 0 \\ 
0 & -\lambda & \ddots & ... & 0 \\ 
0 & 0 & \ddots & 0 & 0 \\ 
0 & 0 & 0 & -\lambda & 0%
\end{array}%
\right) f(\lambda ),\ \   \label{r60} \\
D_{x}f(\lambda ) &=&\left( 
\begin{array}{ccccc}
\lambda u_{N-1,x} & u_{N,x} & 0 & ... & 0 \\ 
0 & \lambda u_{N-1,x} & 2u_{N,x} & \ddots & ... \\ 
... & \ddots & \ddots & \ddots & 0 \\ 
0 & ... & 0 & \lambda u_{N-1,x} & (N-1)u_{N,x} \\ 
-N\lambda ^{N} & -\lambda ^{N-1}Nu_{1,x} & ... & -\lambda ^{2}Nu_{N-2,x} & 
\lambda (1-N)u_{N-1,x}%
\end{array}%
\right) f(\lambda ),  \notag
\end{eqnarray}%
where $f(\lambda )\in L_{\infty }(\mathbb{R}^{2};\mathbb{C}^{N})$ and $%
\lambda \in \mathbb{C}$ is an arbitrary complex parameter. \ Moreover, the
relationships (\ref{r60}) realize a linear matrix representation of the
commutator condition (\ref{r04}) in the vector space $\mathcal{K}%
^{(N)}\{u\}^{N}.$
\end{proposition}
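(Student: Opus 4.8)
The plan is to carry the construction performed for $N=\overline{1,4}$ to the general level of Proposition~\ref{Prop_4}: \emph{(a)} extract the $D_t$-part $q_N(\lambda)$ from the reduced Riemann ideal; \emph{(b)} solve the induced functional-differential constraint for the $D_x$-part $l_N[u;\lambda]$; and \emph{(c)} verify that the resulting pair \eqref{r60} realizes the commutator relation \eqref{r04} on the module generated by $f(\lambda)$.

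First I would repeat the algebraic reduction: from the reduced invariant ideal $R^{(N)}\{u\}$ of \eqref{r48}, Lemma~\ref{Lm_01} together with the constraint $D_t^N u=0$ shows that $\mathrm{Ker}\,D_t\subset R^{(N)}\{u\}$ is generated precisely by the finite chain \eqref{r49}, i.e. $D_tf(\lambda)=q_N(\lambda)f(\lambda)$ with the $x$-independent nilpotent shift matrix $q_N(\lambda)$ of \eqref{r50}. Imposing $D_x$-invariance of the Lax ideal $L^{(N)}\{u\}$ of \eqref{r51} in the linear form \eqref{r52} then forces, exactly as for $N=\overline{2,3}$, the compatibility relation \eqref{r53}, $D_tl_N+l_ND_xu=[q_N,l_N]$.

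Second --- the computational core --- I would solve \eqref{r53}. Writing $l_N=D_xa_N$ as in \eqref{r54} and using \eqref{r04} in the form $D_tD_x=D_xD_t-(D_xu_1)D_x$ together with the $x$-independence of $q_N$, one reduces \eqref{r53} to $D_ta_N=[q_N,a_N]$ of \eqref{r55}, so it is enough to produce one solution $a_N$. Guided by the consistency conditions \eqref{r56} one searches for $a_N$ affine in $x$. In entrywise form \eqref{r55} reads $D_t(a_N)_{ij}=-\lambda(a_N)_{i-1,j}+\lambda(a_N)_{i,j+1}$ (out-of-range entries understood as zero), and one integrates it row by row using $D_tu_j=u_{j+1}$, $D_tu_N=0$ and $D_tx=u_1$: the leading $(N-1)\times(N-1)$ Jordan-like block is recovered from $D_t(\lambda u_{N-1})=\lambda u_N$ and $D_t(ku_N)=0$; the bottom row from $D_t(-N\lambda^{N}x)=-N\lambda^{N}u_1$ and then $D_t(-N\lambda^{N-j+1}u_{j-1})=-N\lambda^{N-j+1}u_j$; and the corner entry is pinned to $\lambda(1-N)u_{N-1}$ by matching the $j=N-1$ and $j=N$ cases in the bottom row. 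This yields \eqref{r57}, hence $l_N=D_xa_N$ gives \eqref{r58}, and after extending $\mathcal{K}$ to $\mathbb{C}\{\{x,t\}\}$ one obtains the linear pair \eqref{r60}.

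Finally, the last clause of the proposition follows formally: on the $\mathcal{K}$-module generated by the components of $f(\lambda)$ one has $D_xf=l_Nf$ and $D_tf=q_Nf$, so $D_xD_tf=q_Nl_Nf$ (since $D_xq_N=0$) and $D_tD_xf=(D_tl_N)f+l_Nq_Nf$; subtracting and using \eqref{r53} gives $[D_x,D_t]f=(q_Nl_N-l_Nq_N-D_tl_N)f=(l_ND_xu_1)f=(D_xu_1)D_xf$, which is exactly the matrix realization of \eqref{r04} in $\mathcal{K}^{(N)}\{u\}^{N}$. The main obstacle I anticipate is the general-$N$ bookkeeping in step two: checking that the coefficients $1,2,\dots,N-1$ along the superdiagonal, $-N\lambda^{N-j+1}$ along the bottom row, and $\lambda(1-N)$ in the corner are simultaneously \emph{forced} by \eqref{r55}, especially at the interface between the Jordan block and the bottom row. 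The cleanest way to make this rigorous is induction on $N$, taking the explicit solutions for $N=\overline{2,3}$ as base cases and using the consistency conditions \eqref{r56} to close the induction step.
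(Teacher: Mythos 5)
Your proposal is correct and follows essentially the same route as the paper: reduction of the ideal $R^{(N)}\{u\}$ to the chain \eqref{r49}--\eqref{r50}, the compatibility condition \eqref{r53}, the substitution $l_N=D_xa_N$ reducing it to $D_ta_N=[q_N,a_N]$, and the explicit solution \eqref{r57}. Your entrywise integration of $D_t(a_N)_{ij}=-\lambda(a_N)_{i-1,j}+\lambda(a_N)_{i,j+1}$ and the closing verification that $[D_x,D_t]f=(D_xu_1)D_xf$ merely make explicit the ``simple calculations'' the paper leaves implicit.
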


The general Lax type representation (\ref{r60}), obtained above for
arbitrary $N\in \mathbb{Z}_{+},$ looks essentially simpler than those
obtained before for $N=\overline{1,4}$ in \cite{PoP,PAPP,GBPPP} and given by
differential-matrix expressions (\ref{r05}), depending at $N\geqslant 3$ on
the solutions to the set of differential-functional equations (\ref{r06}).
As it was already mentioned above, it was first found in a similar form up
to a scaling parameter $\lambda \in \mathbb{C}$ in \cite{Po} \ by means of \
another approach. Taking this simplicity into account, one can apply to the
Lax type representation (\ref{r60}) the gradient-holonomic approach and find
the corresponding bi-Hamiltonian structures, responsible for the Lax type
integrability of the Riemann type hierarchy of hydrodynamical systems (\ref%
{r59}).

\section{The bi-Hamiltonian structures for cases $N=\overline{2,3}$ and $N=4$%
}

To proceed with proving the related bi-Hamiltonian integrability of the
generalized Riemann type dynamical system \ (\ref{r59}) at arbitrary $N\in 
\mathbb{Z}_{+}$ and with finding the naturally associated with the Lax
representations (\ref{r60}) Poissonian structures, we need preliminarily to
study with respect to the gradient-holonomic approach \cite{PP,PM,BPS} the
corresponding spectral properties of the linear differential system (\ref%
{r52}). Below will analyze the next cases: $N=\overline{2,3}$ and $N=4.$

\subsection{ The case $N=2$}

Making use of the exact matrix expression (\ref{r28}), one can define for
the second linear equation of (\ref{r34_1}) the corresponding monodromy
matrix $S(x;\lambda )\in \mathrm{End}$ $\mathbb{C}^{3},x\in \mathbb{R},$
which satisfies \cite{No,MBPS,PM,BPS,HPP} the Novikov-Marchenko commutator
equation 
\begin{equation}
D_{x}S(x;\lambda )=[l_{2}[u;\lambda ],S(x;\lambda )],  \label{r61}
\end{equation}%
Since the trace-functional $\gamma (\lambda ):=\mathrm{tr}S(x;\lambda )$ is,
owing to the Lax representation (\ref{r47}), invariant with respect to the $%
\mathbb{R}\ni x,t$-evolutions on the manifold $M_{2},$ its gradient $\varphi
(x;\lambda ):=\mathrm{grad}\gamma (\lambda )\in T^{\ast }(M_{2}),$ $x\in 
\mathbb{R},$ depending parametrically on the spectral parameter $\lambda \in 
\mathbb{C},$ equals 
\begin{equation}
\varphi (x;\lambda )=\mathrm{tr}(l_{2}^{\prime \ast }S(x;\lambda ))
\label{r62}
\end{equation}%
and satisfies the well known gradient relationship 
\begin{equation}
\vartheta \varphi (x;\lambda )=z(\lambda )\eta \varphi (x;\lambda )
\label{r63}
\end{equation}%
for some meromorphic function $z:\mathbb{C\rightarrow C}$ and $x\in \mathbb{R%
},$ where $\vartheta ,\eta :T^{\ast }(M_{2})\rightarrow T(M_{2})$ are
compatible \cite{Bl,PM,FT} Poissonian structures on the manifold $M_{2},$ \
the dash sign \textquotedblleft $^{\prime }$\textquotedblleft\ means the
usual Frechet derivative and \textquotedblleft $\ast $\textquotedblleft\
means the corresponding conjugation with respect to the standard bilinear
form on $T^{\ast }(M_{2})\times T(M_{2}).$ The latter \ naturally follows
from the Lie-algebraic theory \cite{FT,Bl,PM} of Lax type integrable
dynamical systems, treating them as the corresponding Lie-Poisson flows on
the adjoint space $\mathcal{G}^{\ast }(\lambda )$ to a suitably constructed
centrally extended affine Lie algebra $\mathcal{G}(\lambda ),\lambda \in 
\mathbb{C}.$ By simple enough calculations one find that%
\begin{equation}
\varphi (x;\lambda ):=(\varphi _{1},\varphi _{2})^{\intercal }=(\lambda
(S_{x}^{(11)}-S_{x}^{(22)}),S_{x}^{(21)})^{\intercal },  \label{r64}
\end{equation}%
where we put, by definition, 
\begin{equation}
S(x;\lambda ):=\left( 
\begin{array}{cc}
S^{(11)} & S^{(12)} \\ 
S^{(21)} & S^{(22)}%
\end{array}%
\right) .  \label{r65}
\end{equation}%
Taking into account that equation \ (\ref{r61}) can be rewritten as the
system 
\begin{eqnarray}
S_{x}^{(11)} &=&-u_{2,x}S^{(21)}+2\lambda ^{2}S^{(12)},  \label{r66} \\
S_{x}^{(12)} &=&-u_{2,x}(S^{(11)}-S^{(22)})+2\lambda u_{1,x}S^{(12)},  \notag
\\
S_{x}^{(21)} &=&-2\lambda ^{2}u_{1,x}(S^{(11)}-S^{(22)})-2\lambda
u_{1,x}S^{(21)},  \notag \\
S_{x}^{(22)} &=&-2\lambda ^{2}S^{(12)}+u_{2,x}S^{(21)},  \notag
\end{eqnarray}%
by means of simple enough calculations, consisting in substituting (\ref{r64}%
) into (\ref{r66}), one easily finds the gradient relationship \ (\ref{r63})
in the exact differential-matrix form:%
\begin{equation}
\left( 
\begin{array}{cc}
0 & -1 \\ 
1 & 0%
\end{array}%
\right) \left( 
\begin{array}{c}
\varphi _{1} \\ 
\varphi _{2}%
\end{array}%
\right) =2\lambda \left( 
\begin{array}{cc}
\partial ^{-1} & u_{1,x}\partial ^{-1} \\ 
\partial ^{-1}u_{1,x} & u_{2,x}\partial ^{-1}+\partial ^{-1}u_{2,x}%
\end{array}%
\right) \left( 
\begin{array}{c}
\varphi _{1} \\ 
\varphi _{2}%
\end{array}%
\right) ,  \label{r67}
\end{equation}%
where, by definition, $z(\lambda ):=2\lambda ,$ $\lambda \in \mathbb{C},$ and

\begin{equation}
\vartheta :=\left( 
\begin{array}{cc}
0 & 1 \\ 
-1 & 0%
\end{array}%
\right) ,\eta :=\left( 
\begin{array}{cc}
\partial ^{-1} & u_{1,x}\partial ^{-1} \\ 
\partial ^{-1}u_{1,x} & u_{2,x}\partial ^{-1}+\partial ^{-1}u_{2,x}%
\end{array}%
\right)  \label{r68}
\end{equation}%
constitute the corresponding compatible Poissonian structures on the
functional manifold $M_{2}.$ Thus, one can formulate the following
proposition, before stated in \cite{Pav,BD,GBPPP,PoP} using different
approaches.

\begin{proposition}
\label{Prop_5}\bigskip The Riemann type hydrodynamical system \ (\ref{r02})
\ at $N=2$ is a Lax integrable bi-Hamiltonian flow with respect to the
compatible on the functional manifold $M_{2}$ Poissonian pair \ (\ref{r68}).
\end{proposition}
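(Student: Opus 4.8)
The plan is to assemble Proposition~\ref{Prop_5} from the ingredients already prepared in the excerpt, rather than to rederive anything from scratch. The Lax integrability of the $N=2$ system \eqref{r02} is already furnished by Proposition~\ref{Prop_2}, so what remains is to certify that the pair $(\vartheta,\eta)$ of \eqref{r68} is (i) a compatible pair of Poissonian operators on $M_{2}$, and (ii) genuinely generates the flow \eqref{r34} Hamiltonianly in two distinct ways. First I would record the gradient identity \eqref{r67}: since $\gamma(\lambda)=\mathrm{tr}\,S(x;\lambda)$ is a generating function of conservation laws, conserved along both the $x$- and $t$-flows by \eqref{r61} and the compatibility of \eqref{r47}, its gradient $\varphi(x;\lambda)$ produces, via the Laurent expansion of $z(\lambda)=2\lambda$, an infinite hierarchy of commuting conserved quantities whose gradients are linked by $\vartheta\varphi=z(\lambda)\eta\varphi$. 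This is the standard gradient-holonomic output and I would cite \cite{PP,PM,BPS,FT,Bl} for the fact that such a relationship forces $\vartheta$ and $\eta$ to be a compatible Poissonian pair.

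Next I would verify directly the three algebraic requirements for each of $\vartheta$ and $\eta$ to define a Poisson structure: skew-symmetry with respect to the standard bilinear form on $T^{\ast}(M_{2})\times T(M_{2})$, and the vanishing of the associated Schouten (Jacobi) bracket. For $\vartheta=\left(\begin{smallmatrix}0&1\\-1&0\end{smallmatrix}\right)$ this is immediate: it is a constant skew matrix, hence automatically Poissonian. For $\eta$ one checks skew-symmetry of $\partial^{-1}$, of $u_{1,x}\partial^{-1}$ against $\partial^{-1}u_{1,x}$, and of $u_{2,x}\partial^{-1}+\partial^{-1}u_{2,x}$; the Jacobi identity for $\eta$ follows because $\eta$ has the canonical hydrodynamic-type/Dubrovin--Novikov shape of a weakly nonlocal operator whose locality and symmetry conditions are exactly met here — I would either verify the Schouten bracket componentwise or invoke the criterion of Ferapontov for weakly nonlocal Hamiltonian operators. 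Compatibility (that $\vartheta+c\,\eta$ is Poissonian for all $c$) then reduces, since $\vartheta$ is constant, to the vanishing of the mixed Schouten bracket $[\vartheta,\eta]=0$, which is a short linear computation using that $\vartheta$ has constant coefficients.

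Finally I would exhibit the bi-Hamiltonian representation of the flow itself: using the conserved densities extracted from $\gamma(\lambda)$, pick two consecutive ones $H_{0},H_{1}$ in the hierarchy so that the Lenard--Magri relation $\vartheta\,\mathrm{grad}\,H_{1}=\eta\,\mathrm{grad}\,H_{0}$ holds and both equal the right-hand side of \eqref{r34}, i.e. $D_{t}(u_{1},u_{2})^{\intercal}=\vartheta\,\mathrm{grad}\,H_{1}=\eta\,\mathrm{grad}\,H_{0}$. The explicit Hamiltonians are read off from the $\lambda$-expansion of the trace functional; I would display them and check the identity by a direct variational derivative computation. The main obstacle I anticipate is the verification of the Jacobi identity for the nonlocal operator $\eta$ and the compatibility bracket $[\vartheta,\eta]=0$: these are the only genuinely nontrivial computations, since everything else is either quoted from Propositions~\ref{Prop_2}--\ref{Prop_3} and the cited gradient-holonomic literature, or is linear bookkeeping. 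If a fully self-contained Schouten-bracket check is too long, I would instead emphasize that \eqref{r67} itself — a gradient relationship with a nonconstant meromorphic $z(\lambda)$ between two operators — is by the general theory \cite{FT,Bl,PM,BPS} sufficient to conclude that $(\vartheta,\eta)$ is a compatible Poissonian pair, which then closes the proof.
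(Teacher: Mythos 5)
Your proposal follows essentially the same route as the paper: Lax integrability is quoted from Proposition~\ref{Prop_2}, and the compatible Poissonian pair (\ref{r68}) is read off from the gradient relationship (\ref{r67}), itself obtained by substituting the gradient (\ref{r64}) of the monodromy trace into the Novikov--Marchenko system (\ref{r66}), with the compatibility of the pair justified by appeal to the general gradient-holonomic/Lie-algebraic theory. The extra verifications you propose (direct Schouten-bracket checks for $\eta$ and for the mixed bracket with $\vartheta$, and an explicit Lenard--Magri chain of Hamiltonians) go beyond what the paper actually carries out, which stops at precisely the fallback you describe.
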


As a consequence of Proposition \ (\ref{Prop_5}) easily one states that
there exists \cite{Bl,FT,PM} an associated infinite hierarchy of commuting
to each other integrable bi-Hamiltonian flows on $M_{2}$ 
\begin{equation}
d(u_{1},u_{2})^{\intercal }/dt_{n}:=-(\eta \vartheta
^{-1})^{n}(u_{1,x},u_{2,x})^{\intercal },  \label{r69}
\end{equation}%
where $t_{n}\in \mathbb{R},n\in \mathbb{Z}_{+},$ are the corresponding
evolution parameters.

\subsection{ The cases $N=3$ and $4$}

For the case $N=3$ there was proved in \cite{GPPP,PoP,GBPPP} that the
corresponding Riemann type hydrodynamical system (\ref{r46}) is a
Hamiltonian dynamical system on the $2\pi $-periodic functional manifold $%
M_{3}:=\{(u_{1},u_{2},u_{3})^{\intercal }\in C^{(\infty )}(\mathbb{R}/(2\pi 
\mathbb{Z});\mathbb{R}^{3})\},$ whose exact Poissonian structure is given by
the differential-matrix expression 
\begin{equation}
\eta =\left( 
\begin{array}{ccc}
\partial ^{-1} & u_{1,x}\partial ^{-1} & 0 \\ 
\partial ^{-1}u_{1,x} & u_{2,x}\partial ^{-1}+\partial ^{-1}u_{2,x} & 
\partial ^{-1}u_{3,x} \\ 
0 & u_{3,x}\partial ^{-1} & 0%
\end{array}%
\right) ,  \label{r61a}
\end{equation}%
acting as a linear mapping $\eta :\ T^{\ast }(M_{3})\rightarrow T(M_{3})$
from the cotangent space $T^{\ast }(M_{3})$ to the tangent space\ $\
T(M_{3}).$ Making use of the exact matrix expression (\ref{r45}), one can
define for the linear equation (\ref{r39}) the corresponding monodromy
matrix $S(x;\lambda )\in \mathrm{End}$ $\mathbb{C}^{3},x\in \mathbb{R},$
which satisfies the Novikov-Marchenko commutator equation 
\begin{equation}
dS(x;\lambda )/dx=[l_{3}[u;\lambda ],S(x;\lambda )].  \label{r62a}
\end{equation}%
Since the trace-functional $\gamma (\lambda ):=\mathrm{tr}S(x;\lambda )$ is
invariant with respect to the $\mathbb{R}\ni x,t$-evolutions on the manifold 
$M_{3},$ its gradient $\varphi (x;\lambda ):=\mathrm{grad}\gamma (\lambda
)\in T^{\ast }(M_{3}),$ $x\in \mathbb{R},$ depending parametrically on the
spectral parameter $\lambda \in \mathbb{C},$ equals 
\begin{equation}
\varphi (x;\lambda )=\mathrm{tr}(l_{3}^{\prime \ast }S(x;\lambda
))=(-3\lambda ^{2}S_{x}^{(23)},\lambda
(S_{x}^{(11)}+S_{x}^{(22)}-2S_{x}^{(23)},S_{x}^{(21)}+2S_{x}^{(32)})^{%
\intercal }  \label{r63a}
\end{equation}%
and satisfies the gradient relationship 
\begin{equation}
\vartheta \varphi (x;\lambda )=z(\lambda )\eta \varphi (x;\lambda )
\label{r64a}
\end{equation}%
for some meromorphic function $z:\mathbb{C\rightarrow C}$ and all $x\in 
\mathbb{R},$ where $\vartheta :T^{\ast }(M_{3})\rightarrow T(M_{3})$ is a
second compatible with \ (\ref{r61a}) Poisson structure on the manifold $%
M_{3}.$ \ In this case the standard calculations, consisting in substituting
(\ref{r63}) into (\ref{r62}) and reducing the result to the form \ (\ref{r63}%
), give rise by means of slightly cumbersome calculations to the equivalent
to \ (\ref{r64a}) relationship%
\begin{equation}
\varphi (x;\lambda )=\lambda ^{2}\vartheta ^{-1}\eta \varphi (x;\lambda ),
\label{r64b}
\end{equation}%
where 
\begin{equation}
\vartheta ^{-1}=\left( 
\begin{array}{ccc}
\partial u_{3}+u_{3}\partial & -u_{2,x}-\partial u_{2} & -2u_{1}\partial
+2\partial \frac{u_{3}u_{1,x}}{u_{3,x}} \\ 
u_{2,x}-u_{2}\partial & \partial u_{1}+u_{1}\partial & -2-2\partial \frac{%
u_{3}}{u_{3,x}} \\ 
-2\partial u_{1}+2\frac{u_{3}u_{1,x}}{u_{3,x}}\partial & 2-2\frac{u_{3}}{%
u_{3,x}}\partial & 
\begin{array}{c}
-(\frac{u_{3}u_{1,x}^{2}+2u_{3}u_{2,x}}{u_{3,x}^{2}})\partial - \\ 
-\partial (\frac{u_{3}u_{1,x}^{2}+2u_{3}u_{2,x}}{u_{3,x}^{2}})%
\end{array}%
\end{array}%
\right)  \label{r64c}
\end{equation}%
is the second compatible with \ (\ref{r61a}) co-Poissonian structure on the
functional manifold $M_{3}.$

A completely similar results hold in the case $N=4.$ Namely, the
corresponding Novikov-Marchenko equation%
\begin{equation}
dS(x;\lambda )/dx=[l_{4}[u;\lambda ],S(x;\lambda )].  \label{r70}
\end{equation}%
jointly with the expression%
\begin{eqnarray}
\varphi (x;\lambda ) &=&\mathrm{tr}(l_{4}^{\prime \ast }S(x;\lambda
))=(-4\lambda ^{3}S_{x}^{(24)},-4\lambda ^{2}S_{x}^{(34)},  \label{r71} \\
&&\lambda
(S_{x}^{(11)}+S_{x}^{(22)}+S_{x}^{(33)}-3S_{x}^{(44)}),S_{x}^{(21)}+2S_{x}^{(32)}+3S^{(43)})^{\intercal },
\notag
\end{eqnarray}%
is reduced by means of simple but cumbersome \ calculations to the gradient
relationship\ 
\begin{equation}
\varphi (x;\lambda )=\lambda ^{3}\vartheta ^{-1}\eta \varphi (x;\lambda ),
\label{r71a}
\end{equation}%
where the Poissonian structures $\ \ \eta $ and $\vartheta :T^{\ast
}(M_{4})\rightarrow T(M_{4})$ are, respectively, inverse to the
co-Poissonian operators 
\begin{equation}
\eta ^{-1}=\left( 
\begin{array}{cccc}
0 & 0 & -\partial & \partial \frac{u_{3,x}}{u_{4,x}} \\ 
0 & \partial & 0 & -\partial \frac{u_{2,x}}{u_{4,x}} \\ 
-\partial & 0 & 0 & \partial \frac{u_{1,x}}{u_{4,x}} \\ 
\frac{u_{3,x}}{u_{4,x}}\partial & -\frac{u_{2,x}}{u_{4,x}}\partial & \frac{%
u_{1,x}}{u_{4,x}}\partial & 
\begin{array}{c}
\frac{1}{2}[u_{4,x}^{-2}(u_{2,x}^{2}-2u_{1,x}u_{3,x})\partial + \\ 
+\partial (u_{2,x}^{2}-2u_{1,x}u_{3,x})u_{4,x}^{-2}]%
\end{array}%
\end{array}%
\right)  \label{r72}
\end{equation}%
and 
\begin{equation}
\vartheta ^{-1}=\left( 
\begin{array}{cccc}
0 & -3u_{4,x} & u_{3}\partial & 
\begin{array}{c}
\begin{array}{c}
-\partial u_{2}-u_{2}\partial - \\ 
-u_{3}\partial \frac{u_{3,x}}{u_{4,x}}- \\ 
-\frac{u_{3,x}}{u_{4,x}}\ \partial u_{3}%
\end{array}
\\ 
\end{array}
\\ 
3u_{4,x} & 0 & u_{2}\partial & 
\begin{array}{c}
\\ 
\begin{array}{c}
\partial u_{1}+u_{1}\partial + \\ 
+u_{2}\partial \frac{u_{3,x}}{u_{4,x}}+ \\ 
+\frac{u_{3,x}}{u_{4,x}}\ \partial u_{2}%
\end{array}
\\ 
\end{array}
\\ 
\partial u_{3} & \partial u_{2} & -u_{1}\partial -u_{1}\partial & 
\begin{array}{c}
\\ 
\begin{array}{c}
-5+\partial (\frac{u_{4}-u_{2}u_{2,x}}{u_{4,x}})- \\ 
-\partial \frac{(u_{1}u_{3})_{x}}{u_{4,x}}-u_{1}\partial \frac{u_{3,x}}{%
u_{4,x}}%
\end{array}
\\ 
\end{array}
\\ 
\begin{array}{c}
-\partial u_{2}-u_{2}\partial - \\ 
-u_{3}\partial \frac{u_{3,x}}{u_{4,x}}- \\ 
-\frac{u_{3,x}}{u_{4,x}}\ \partial u_{3}%
\end{array}
& 
\begin{array}{c}
\partial u_{1}+u_{1}\partial + \\ 
+u_{2}\partial \frac{u_{3,x}}{u_{4,x}}+ \\ 
+\frac{u_{3,x}}{u_{4,x}}\ \partial u_{2}%
\end{array}
& 
\begin{array}{c}
5+(\frac{u_{4}-u_{2}u_{2,x}}{u_{4,x}})\partial - \\ 
-\frac{(u_{1}u_{3})_{x}}{u_{4,x}}\partial -\frac{u_{3,x}}{u_{4,x}}\partial
u_{1}%
\end{array}
& 
\begin{array}{c}
\\ 
\begin{array}{c}
\frac{u_{4}-u_{2}u_{2,x}}{u_{4,x}^{2}}\partial + \\ 
-\frac{u_{3,x}}{u_{4,x}^{2}}\partial -\partial \frac{u_{3,x}}{u_{4,x}^{2}}-
\\ 
-\partial \frac{u_{4}-u_{2}u_{2,x}}{u_{4,x}^{2}}%
\end{array}%
\end{array}%
\end{array}%
\right) ,  \label{r73}
\end{equation}%
on the functional manifold $M_{4}.$ Below we will reanalyze the
bi-Hamiltonian structures of the generalized Riemann type hierarchy (\ref%
{r01}) from the geometric point of view, described in detail in \cite%
{PM,BPS,PoP}.

\subsection{Poissonian structures: the geometric approach}

To construct the Hamiltonian structures, related with the general dynamical
system (\ref{r02}), by means of the geometric approach \ it is first
necessary to present it in the following equivalent form:%
\begin{equation}
\left. 
\begin{array}{c}
\begin{array}{c}
du_{1}/dt=u_{2}-u_{1}u_{1,x} \\ 
du_{2}/dt=u_{3}-u_{1}u_{2,x} \\ 
... \\ 
du_{j}/dt=u_{j+1}-u_{1}u_{j,x} \\ 
... \\ 
du_{N}/dt=-u_{1}u_{N,x}%
\end{array}%
\end{array}%
\right\} :=K[u],\text{ }  \label{H33a}
\end{equation}
where $K:M_{N}\rightarrow T(M_{N})$ is the corresponding vector field on the
functional manifold $M_{N}$ for a fixed $N\in \mathbb{Z}_{+}.$ Then the
following proposition \cite{PoP,PM,BPS} holds.

\begin{proposition}
\label{Prop_6}Let $\psi \in T^{\ast }(M_{N})$ be, in general, a quasi-local
functional-analytic vector, which satisfies the condition $\psi ^{\prime
}\neq \psi ^{\prime ,\ast }$ and solves the Lie-Lax equation%
\begin{equation}
L_{K}\psi :=\psi _{t}+K^{\prime ,\ast }\psi =\mathrm{grad}\text{ }\mathcal{L}%
,\text{ \ \ }  \label{H33}
\end{equation}%
for some smooth functional $\mathcal{L}\in D(M_{N}).$ Then the dynamical
system (\ref{H33a}) allows the Hamiltonian representation%
\begin{equation}
\begin{array}{c}
K[u]=-\eta \text{ }\mathrm{grad}\text{ }H_{(\eta )}[u],\text{ } \\ 
H_{\eta }=(\psi _{(\eta )},K)-\mathcal{L}_{(\eta )},\text{ \ \ }\eta
^{-1}=\psi _{(\eta )}^{\prime }-\psi _{(\eta )}^{\prime ,\ast },%
\end{array}
\label{H34}
\end{equation}%
where $\eta :T^{\ast }(M_{N})\rightarrow T(M_{N})$ is the corresponding
Poissonian structure, invertible \ on $M_{N}.$ Otherwise, if the operator $%
\eta ^{-1}:T(M_{N})\rightarrow T^{\ast }(M_{N})$ is not invertible, the
following relationship 
\begin{equation}
\mathrm{grad}\text{ }H_{(\eta )}[u]=-\eta ^{-1}K[u]  \label{H34aa}
\end{equation}%
holds on the manifold $M_{N}.$
\end{proposition}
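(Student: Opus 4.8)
The plan is to read the Lie--Lax equation (\ref{H33}) geometrically: the skew\nobreakdash-symmetric operator it produces is a closed, flow\nobreakdash-invariant $2$\nobreakdash-form on $M_N$, i.e. a (pre)symplectic structure, whose inverse — when it exists — is the sought Poissonian structure $\eta$, and whose contraction with $K$ is the differential of the Hamiltonian. So the whole proof is a Cartan\nobreakdash-calculus computation transplanted to the quasi-local functional setting, together with the identification of $H_{(\eta)}$.

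First I would set $\eta_{(\eta)}^{-1}:=\psi_{(\eta)}^{\prime}-\psi_{(\eta)}^{\prime,\ast}$; being the antisymmetrization of the Fr\'echet derivative of the $1$-form $\psi_{(\eta)}^{\flat}=(\psi_{(\eta)},\cdot)$, this operator is the variational realization of the exterior derivative $\omega_{(\eta)}:=d\psi_{(\eta)}^{\flat}$, it is nonzero by the hypothesis $\psi^{\prime}\neq\psi^{\prime,\ast}$, and since $d^{2}=0$ in the variational complex it is automatically closed and thus defines a presymplectic structure on $M_N$. Next, because $L_{K}$ commutes with $d$, equation (\ref{H33}) gives $L_{K}\omega_{(\eta)}=d(L_{K}\psi_{(\eta)}^{\flat})=d(\mathrm{grad}\,\mathcal{L}_{(\eta)})=0$, so $\omega_{(\eta)}$ is invariant along the flow $K$. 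Cartan's identity $L_{K}=\iota_{K}d+d\iota_{K}$ applied to $\psi_{(\eta)}^{\flat}$ then yields $\iota_{K}\omega_{(\eta)}=L_{K}\psi_{(\eta)}^{\flat}-d\iota_{K}\psi_{(\eta)}^{\flat}=\mathrm{grad}\,\mathcal{L}_{(\eta)}-\mathrm{grad}\,(\psi_{(\eta)},K)=-\mathrm{grad}\,H_{(\eta)}$ with $H_{(\eta)}=(\psi_{(\eta)},K)-\mathcal{L}_{(\eta)}$, which in operator form, $\iota_{K}\omega_{(\eta)}=\eta_{(\eta)}^{-1}K$, is exactly (\ref{H34aa}). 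The same identity follows by a direct computation, $\mathrm{grad}\,H_{(\eta)}=\psi^{\prime,\ast}K+K^{\prime,\ast}\psi-\mathrm{grad}\,\mathcal{L}=\psi^{\prime,\ast}K-\psi_{t}=-(\psi^{\prime}-\psi^{\prime,\ast})K$, using the product rule for gradients of bilinear functionals, self-adjointness of $(\mathrm{grad}\,\mathcal{L})^{\prime}$, and (\ref{H33}). Finally, if $\eta_{(\eta)}^{-1}$ is invertible on all of $M_N$, then $\omega_{(\eta)}$ is symplectic, its inverse $\eta:=\eta_{(\eta)}$ is a Poissonian structure on $M_N$ (the Jacobi identity passing from the closed form to its inverse), and applying $\eta$ to (\ref{H34aa}) gives the Hamiltonian representation $K[u]=-\eta\,\mathrm{grad}\,H_{(\eta)}[u]$ of (\ref{H33a}); if instead $\eta_{(\eta)}^{-1}$ degenerates on a part of $M_N$, only the relation (\ref{H34aa}) survives, as stated.

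The algebraic inputs above — the product rule $\mathrm{grad}\,(\psi,K)=\psi^{\prime,\ast}K+K^{\prime,\ast}\psi$ and the self-adjointness $(\mathrm{grad}\,\mathcal{L})^{\prime}=(\mathrm{grad}\,\mathcal{L})^{\prime,\ast}$ — are routine and I would only sketch them componentwise. The main obstacle is analytic rather than formal: $\psi$ is merely \emph{quasi-local}, so $\psi^{\prime}$ and $\eta_{(\eta)}^{-1}$ involve nonlocal operators (powers of $\partial^{-1}$), and one must verify that the exterior-derivative and Cartan-calculus manipulations remain legitimate in this extended functional class, that $\omega_{(\eta)}$ is well defined and closed in the variational sense on $M_N$, and — this being exactly what forces the dichotomy in the statement — decide precisely when $\eta_{(\eta)}^{-1}$ is invertible on the whole manifold $M_N$ rather than only on an open dense subset. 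For the concrete hierarchies $N=\overline{2,4}$ this reduces to exhibiting the solution $\psi$ of (\ref{H33}) explicitly and inverting the resulting skew-symmetric operator, thereby recovering the structures (\ref{r68}), (\ref{r61a})--(\ref{r64c}) and (\ref{r72})--(\ref{r73}).
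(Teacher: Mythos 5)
Your argument is correct, and it is exactly the intended one: the paper itself states Proposition \ref{Prop_6} without proof, merely citing \cite{PoP,PM,BPS}, where this same derivation appears. The essential content is your two-line direct computation, $\mathrm{grad}\,(\psi,K)=\psi^{\prime,\ast}K+K^{\prime,\ast}\psi$ combined with (\ref{H33}) to give $\mathrm{grad}\,H_{(\eta)}=-(\psi^{\prime}-\psi^{\prime,\ast})K=-\eta^{-1}K[u]$, i.e.\ (\ref{H34aa}), with (\ref{H34}) following by inversion when $\eta^{-1}$ is nondegenerate; the Cartan-calculus packaging (closedness of $d\psi^{\flat}$, flow-invariance, passage of the Jacobi identity to the inverse) is consistent with the paper's geometric framing and with the Remark following the proposition, which uses precisely the self-adjointness $(\mathrm{grad}\,\bar{\mathcal{L}})^{\prime}=(\mathrm{grad}\,\bar{\mathcal{L}})^{\prime,\ast}$ you invoke.
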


\begin{remark}
Concerning a general solution to \ the Lie-Lax equation (\ref{H33}) it is
easy to observe that it possesses the following representation:%
\begin{equation}
\psi =\bar{\psi}+\mathrm{grad}\text{ }\mathcal{\bar{L}},  \label{H34aa1}
\end{equation}%
where 
\begin{equation}
L_{K}\bar{\psi}:=\bar{\psi}_{t}+K^{\prime ,\ast }\bar{\psi}=0,\text{ \ }L_{K}%
\mathcal{\bar{L}=L}.\text{\ \ }  \label{H34aa2}
\end{equation}%
The related co-Poissonian operator 
\begin{equation}
\eta ^{-1}:=\psi ^{\prime }-\psi ^{\prime ,\ast }=\bar{\psi}^{\prime }-\bar{%
\psi}^{\prime ,\ast },  \label{H34aa3}
\end{equation}
owing to the Volterra symmetry condition $(\mathrm{grad}$ $\mathcal{\bar{L}}%
)^{\prime }=(\mathrm{grad}$ $\mathcal{\bar{L}})^{\prime ,\ast },$ persists
to be unchangeable.
\end{remark}

Based on Proposition \ref{Prop_6}, it is enough to search for non-symmetric
functional-analytic solutions to \ (\ref{H33}) making use for this, for
instance, of special computer-algebraic algorithms.

\subsubsection{The case $N=3$}

The corresponding dynamical system 
\begin{equation}
\left. 
\begin{array}{c}
\begin{array}{c}
du_{1}/dt=u_{2}-u_{1}u_{1,x} \\ 
du_{2}/dt=u_{3}-u_{1}u_{2,x} \\ 
du_{3}/dt=-u_{1}u_{3,x}%
\end{array}%
\end{array}%
\right\} :=K[u]  \label{H34ab}
\end{equation}%
is defined on the functional manifold $M_{3}.$ A vector $\psi :=(\psi
_{1},\psi _{2},\psi _{3})^{\intercal }\in T^{\ast }(M_{3})$ satisfies owing
to \ (\ref{H33}) a set of functional-differential equations%
\begin{eqnarray}
d\psi _{1}/dt+u_{1}\psi _{1,x}-u_{2,x}\psi _{2}-u_{3,x}\psi _{3} &=&\delta 
\mathcal{L}/\delta u_{1},  \notag \\
d\psi _{2}/dt+\psi _{1}-(u_{1}\psi _{2)x} &=&\delta \mathcal{L}/\delta u_{2},
\label{H34a} \\
d\psi _{3}/dt+\psi _{2}+(u_{1}\psi _{3})_{x} &=&\delta \mathcal{L}/\delta
u_{3}  \notag
\end{eqnarray}%
for some smooth functional $\mathcal{L}\in \mathcal{D}(M_{3}).$ System \ (%
\ref{H34a}) can be slightly simplified, if to use the differentiation $%
D_{t}:M_{3}\rightarrow T(M_{3}):$ 
\begin{eqnarray}
D_{t}\psi _{1}-u_{2,x}\psi _{2}-u_{3,x}\psi _{3} &=&\delta \mathcal{L}%
/\delta u_{1}  \label{H34b} \\
D_{t}\psi _{2}+\psi _{1}+u_{1,x}\psi _{2} &=&\delta \mathcal{L}/\delta u_{2},
\notag \\
D_{t}\psi _{3}+\psi _{2}+u_{1,x}\psi _{3} &=&\delta \mathcal{L}/\delta u_{3},
\notag
\end{eqnarray}%
which is equivalent to the scalar functional-analytic expression%
\begin{equation}
D_{t}^{2}(\alpha \psi _{1})=D_{t}(\alpha \delta \mathcal{L}/\delta
u_{1})+(D_{t}\alpha )\delta \mathcal{L}/\delta u_{1}+(D_{t}^{2}\alpha
)\delta \mathcal{L}/\delta u_{2}+\delta \mathcal{L}/\delta u_{3}
\label{H34c}
\end{equation}%
on the alone function $\psi _{1}\in \mathcal{K}\{u\},$ where we have denoted
the function $\alpha :=u_{3,x}^{-1},$ satisfying the useful differential
relationships 
\begin{equation}
D_{t}\alpha =u_{1,x}\alpha ,\text{ \ \ \ }D_{t}^{2}\alpha =u_{2,x}\alpha ,%
\text{ \ \ \ }D_{t}^{3}\alpha =1,\text{ \ \ }D_{t}^{4}\alpha =0.
\label{H34cc}
\end{equation}%
The corresponding functions $\psi _{2},\psi _{3}$ $\in \mathcal{K}\{u\}$ are
given by the functional-operator expressions%
\begin{eqnarray}
\psi _{2} &=&\alpha ^{-1}D_{t}^{-1}(-\alpha \psi _{1}+\alpha \delta \mathcal{%
L}/\delta u_{2}),  \label{H34d} \\
\psi _{3} &=&\alpha ^{-1}D_{t}^{-1}(-\alpha \psi _{2}+\alpha \delta \mathcal{%
L}/\delta u_{3},  \notag
\end{eqnarray}%
easily following from \ (\ref{H34b}). In the special case $\mathcal{L}=0$
equation \ (\ref{H34c}) reduces to 
\begin{equation}
D_{t}^{2}(\alpha \psi _{1})=0,  \label{H34e}
\end{equation}%
whose functional-analytic solutions can be found analytically \ by means of
both differential-algebraic tools, devised in \cite{PAPP}, and modern
computer-algebraic algorithms. In particular, making use of the
differential-algebraic approach of the work \cite{PAPP}, one can easily
enough to find, amongst many others, the following solutions to (\ref{H34c}%
): 
\begin{eqnarray}
\psi _{1} &=&u_{1,x}/2,\text{ \ \ \ }\mathcal{L}_{(\eta )}=\frac{1}{2}%
\int_{0}^{2\pi }(2u_{3}+u_{2}u_{1,x})dx;  \label{H34f} \\
&&  \notag \\
\psi _{1} &=&u_{1,x}u_{3}-u_{1}u_{3,x},\text{ \ \ }\mathcal{L}_{(\vartheta
)}=0;  \notag \\
&&  \notag
\end{eqnarray}%
and%
\begin{eqnarray}
\psi _{1} &=&-u_{3,x}/2,\text{ \ \ }\mathcal{L}_{(\vartheta _{0})}=0;
\label{H34ff} \\
&&  \notag \\
\psi _{1} &=&u_{2}u_{3,x},\text{ \ \ \ }\mathcal{L}_{(\vartheta _{1})}=0; 
\notag \\
&&  \notag \\
\psi _{1} &=&u_{1,x}u_{2}-2u_{3}-(\frac{u_{1}u_{2}}{u_{3}}-\frac{u_{2}^{3}}{%
3u_{3}^{2}})u_{3,x},\text{ \ \ \ }\mathcal{L}_{(\vartheta _{2})}=0;  \notag
\\
&&  \notag \\
\psi _{1} &=&u_{1,x}-\frac{u_{2}^{2}u_{3,x}}{2u_{3}},\text{ \ \ \ }\mathcal{L%
}_{(\vartheta _{3})}=0;  \notag \\
&&  \notag \\
\psi _{1} &=&u_{1,x}-\frac{u_{1}u_{3,x}}{u_{3}},\text{ \ \ \ }\mathcal{L}%
_{(\vartheta _{4})}=0;  \notag
\end{eqnarray}%
giving rise to the generating vectors%
\begin{equation}
\begin{array}{c}
\psi _{(\eta
)}=(u_{1,x}/2,0,-u_{3,x}^{-1}u_{1,x}^{2}/2+u_{3,x}^{-1}u_{2,x})^{\intercal },%
\text{ \ \ \ }\mathcal{L}_{(\eta )}=\frac{1}{2}\int_{0}^{2\pi
}(2u_{3}+u_{2}u_{1,x})dx; \\ 
\\ 
\psi _{(\vartheta )}=(u_{1,x}u_{3}-u_{1}u_{3,x},\text{ \ }%
u_{1}u_{2,x}-u_{1,x}u_{2},\text{ \ \ }2u_{2}-u_{1}u_{1,x}+\frac{%
u_{3}u_{1,x}^{2}-2u_{3}u_{2,x}}{u_{3,x}})^{\intercal },\text{ \ \ \ }%
\mathcal{L}_{(\vartheta )}=0; \\ 
\\ 
\psi _{(\theta )}=(u_{2}-\frac{u_{2}u_{2,x}^{3}}{6u_{3,x}}+\frac{%
u_{3}u_{2,x}^{4}}{24u_{3,x}^{3}},\frac{u_{2}u_{2,x}}{u_{3,x}}-\frac{%
u_{2}u_{1,x}^{2}}{2u_{3,x}}-\frac{u_{3}u_{2,x}^{2}}{2u_{3,x}^{2}}-\frac{%
u_{3}u_{1,x}u_{2,x}^{3}}{6u_{3,x}^{3}}+\frac{u_{3}u_{2,x}^{5}}{30u_{3,x}^{4}}%
, \\ 
\\ 
-\frac{u_{2}u_{2,x}^{2}}{u_{3,x}^{2}}+\frac{u_{2}u_{1}^{2}u_{2,x}}{%
u_{3,x}^{2}}+\frac{u_{3}u_{2,x}^{2}}{u_{3,x}^{2}}-\frac{%
u_{2}u_{1,x}u_{2,x}^{3}}{u_{3,x}^{3}}+\frac{u_{3}u_{1,x}u_{2,x}^{4}}{%
u_{3,x}^{4}}-\frac{u_{3}u_{2,x}^{6}}{30u_{3,x}^{5}})^{\intercal },\text{ \ \ 
}\mathcal{L}_{(\theta )}=0;%
\end{array}
\label{H34g}
\end{equation}%
and 
\begin{equation}
\begin{array}{c}
\psi _{(\vartheta _{0})}=(-u_{3,x}/2,\text{ \ }u_{2,x}/2,\text{ \ \ }%
u_{1,x}/2-u_{3,x}^{-1}u_{2,x}^{2}/2)^{\intercal },\text{ \ \ \ }\mathcal{L}%
_{(\vartheta _{0})}=0; \\ 
\\ 
\psi _{(\vartheta _{1})}=(u_{2}u_{3,x},\text{ \ }-u_{1}u_{3,x},\text{ \ \ }%
u_{1}u_{2}u_{3,x}u_{3}^{-1}-u_{2}^{3}u_{3,x}u_{3}^{-2}/3)^{\intercal },\text{
\ }\mathcal{L}_{(\vartheta _{1})}=0; \\ 
\end{array}
\label{H35}
\end{equation}%
\begin{equation*}
\begin{array}{c}
\psi _{(\vartheta _{2})}=(u_{1,x}u_{2}-2u_{3}-(\frac{u_{1}u_{2}}{u_{3}}-%
\frac{u_{2}^{3}}{3u_{3}^{2}})u_{3,x},-u_{2}-\frac{u_{1}^{2}u_{2,x}}{2u_{3}}+%
\frac{u_{3,x}u_{2}^{4}}{12u_{3}^{3}}-\frac{u_{3}u_{2,x}}{u_{3,x}}+\frac{%
u_{3}u_{1,x}^{2}}{2u_{3,x}}, \\ 
\\ 
u_{1}-\frac{u_{1}u_{1,x}u_{2,x}}{u_{3,x}}+\frac{u_{1}^{2}u_{2}u_{3,x}}{%
2u_{3}^{2}}-\frac{u_{1}u_{2}^{3}u_{3,x}}{3u_{3}^{3}}+\frac{u_{3}u_{1,x}}{%
u_{3,x}}+\frac{u_{2}^{5}u_{3,x}}{15u_{3}^{4}}-\frac{2u_{3}u_{1,x}^{2}u_{2,x}%
}{u_{3,x}^{2}}+ \\ 
\\ 
+\frac{u_{2}u_{1,x}^{2}}{2u_{3,x}}+\frac{u_{1}u_{2,x}^{3}}{3u_{3,x}^{2}}-%
\frac{u_{3}u_{1,x}u_{2,x}}{2u_{3,x}^{2}}-\frac{u_{2}u_{2,x}^{4}}{%
12u_{3,x}^{3}}+\frac{u_{3}u_{1,x}u_{2,x}^{3}}{3u_{3,x}^{3}}-\frac{%
u_{3}u_{2,x}^{5}}{15u_{3,x}^{4}})^{\intercal },\mathcal{L}_{(\vartheta
_{2})}=0; \\ 
\\ 
\psi _{(\vartheta _{3})}=(u_{1,x}-\frac{u_{2}^{2}u_{3,x}}{2u_{3}^{2}},\frac{%
u_{2}^{3}u_{3,x}-6u_{3}^{3}}{6u_{3}^{3}},\frac{%
(u_{2}^{4}-6u_{1}^{2}u_{3}^{2})u_{3,x}}{12u_{3}^{4}}+\frac{u_{2}-u_{1}u_{1,x}%
}{u_{3}}+\frac{u_{2}u_{2,x}(3u_{1}u_{3}-u_{2}^{2})}{3u_{3}})^{\intercal },%
\mathcal{L}_{(\vartheta _{3})}=0; \\ 
\\ 
\psi _{(\vartheta _{4})}=(u_{1,x}-\frac{u_{1}u_{3,x}}{u_{3}},\text{ \ \ \ }%
\frac{u_{1}u_{2,x}-u_{1,x}u_{2}}{u_{3}},\frac{u_{1}u_{1,x}-2u_{2}}{u_{3}}+%
\frac{2u_{2,x}-u_{1,x}^{2}}{u_{3,x}})^{\intercal },\text{ \ \ \ \ \ \ \ \ \ }%
\mathcal{L}_{(\vartheta _{4})}=0;%
\end{array}%
\end{equation*}%
As a result of \ (\ref{H34}) and \ expressions (\ref{H34g}) one easily
obtains the corresponding co-Poissonian operators: 
\begin{equation}
\eta ^{-1}:=\psi _{(\eta )}^{\prime }-\psi _{(\eta )}^{\prime ,\ast }=\left( 
\begin{array}{ccc}
\partial & 0 & -\partial u_{1,x}u_{3,x}^{-1} \\ 
0 & 0 & \partial u_{3,x}^{-1} \\ 
-u_{1x}u_{3,x}^{-1}\partial & u_{3,x}^{-1}\partial & 
\begin{array}{c}
\frac{1}{2}(u_{1,x}^{2}u_{3,x}^{-2}\partial +\partial
u_{1,x}^{2}u_{3,x}^{-2})- \\ 
-(u_{2,x}u_{3,x}^{-2}\partial +\partial u_{2,x}u_{3,x}^{-2})%
\end{array}%
\end{array}%
\right) ,  \label{H35a}
\end{equation}%
\begin{equation*}
\vartheta ^{-1}:=\psi _{(\vartheta )}^{\prime }-\psi _{(\vartheta )}^{\prime
,\ast }=\left( 
\begin{array}{ccc}
u_{3}\partial +\partial u_{3} & -u_{2,x}-\partial u_{2} & -2u_{1}\partial
+2\partial \frac{u_{3}u_{1,x}}{u_{3,x}} \\ 
u_{2,x}-u_{2}\partial & u_{1}\partial +\partial u_{1} & -2-2\partial \frac{%
u_{3}}{u_{3,x}} \\ 
-2\partial u_{1}+2\frac{u_{3}u_{1,x}}{u_{3,x}}\partial & 2-2\frac{u_{3}}{%
u_{3,x}}\partial & 
\begin{array}{c}
-(\frac{u_{3}u_{1,x}^{2}+2u_{3}u_{2,x}}{u_{3,x}^{2}})\partial - \\ 
-\partial (\frac{u_{3}u_{1,x}^{2}+2u_{3}u_{2,x}}{u_{3,x}^{2}})%
\end{array}%
\end{array}%
\right) ,
\end{equation*}%
and 
\begin{equation}
\vartheta _{0}^{-1}:=\psi _{(\vartheta _{0})}^{\prime }-\psi _{(\vartheta
_{0})}^{\prime ,\ast }=\left( 
\begin{array}{ccc}
0 & 0 & 0 \\ 
0 & \partial & -\partial u_{2,x}u_{3,x}^{-1} \\ 
0 & -u_{2,x}u_{3,x}^{-1}\partial & \frac{1}{2}(u_{2,x}^{2}u_{3,x}^{-2}%
\partial +\partial u_{2,x}^{2}u_{3,x}^{-2})%
\end{array}%
\right) ,  \label{H35c}
\end{equation}%
\begin{equation*}
\vartheta _{1}^{-1}:=\psi _{(\vartheta _{1})}^{\prime }-\psi _{(\vartheta
_{1})}^{\prime ,\ast }=\left( 
\begin{array}{ccc}
0 & 2u_{3,x} & u_{2}\partial -u_{2}u_{3,x}/u_{3} \\ 
-2u_{3,x} & 0 & 
\begin{array}{c}
-u_{1}\partial -u_{1}u_{3,x}/u_{3}+ \\ 
+u_{2}^{2}u_{3,x}/u_{3}^{2}%
\end{array}
\\ 
u_{2}u_{3,x}/u_{3}+\partial u_{2} & 
\begin{array}{c}
u_{1}u_{3,x}/u_{3}- \\ 
-u_{2}^{2}u_{3,x}/u_{3}^{2}-\partial u_{1}%
\end{array}
& 
\begin{array}{c}
u_{1}u_{2}\partial u_{3}^{-1}+u_{3}^{-1}\partial u_{1}u_{2}- \\ 
-\frac{u_{2}^{3}}{3}\partial \frac{1}{u_{3}^{2}}-\frac{1}{3u_{3}^{2}}%
\partial u_{2}^{3}%
\end{array}%
\end{array}%
\right) ,
\end{equation*}%
and so on. The second expression \ of (\ref{H35a}) coincides exactly with
the co-Poissonian operator \ (\ref{r64c}), satisfying the gradient
relationship \ (\ref{r64b}), which proves the bi-Hamiltonicity of the
Riemann type equation \ (\ref{H34ab}). Concerning the next two expressions \
(\ref{H35a}) and \ (\ref{H35c}) it is easy to observe that \ they are not
strictly invertible, since the translation vector field $d/dx:M_{3}%
\rightarrow T(M_{3})$ with components $(u_{1,x},u_{2,x},u_{3,x})^{\intercal
}\in T(M_{3})$ belongs to their kernels, that is $\eta
^{-1}(u_{1,x},u_{2,x},u_{3,x})^{\intercal }=$ $0$ $=\vartheta
_{0}^{-1}(u_{1,x},u_{2,x},u_{3,x})^{\intercal }.$ Nonetheless, upon formal
inverting the operator expression (\ref{H35a}), we obtain by means of simple
enough, but slightly cumbersome, direct calculations, that the Poissonian $%
\eta $-operator looks as

\begin{equation}
\eta =\left( 
\begin{array}{ccc}
\partial ^{-1} & u_{1,x}\partial ^{-1} & 0 \\ 
\partial ^{-1}u_{1,x} & u_{2,x}\partial ^{-1}+\partial ^{-1}u_{2,x} & 
\partial ^{-1}u_{3,x} \\ 
0 & u_{3,x}\partial ^{-1} & 0%
\end{array}%
\right) ,  \label{H36}
\end{equation}%
coinciding with expression \ (\ref{r61a}), and the corresponding Hamiltonian
function equals 
\begin{equation}
H_{(\eta )}:=(\psi _{(\eta )},K)-\mathcal{L}_{(\eta )}=\int_{0}^{2\pi
}dx(u_{1,x}u_{2}-u_{3}),  \label{H37}
\end{equation}%
Concerning the operator expression\ (\ref{H35c}) the corresponding
Hamiltonian function%
\begin{equation}
H_{(\vartheta _{0})}:=(\psi _{(\vartheta _{0})},K)=\int_{0}^{2\pi
}dx(u_{3}u_{2,x})  \label{H38}
\end{equation}%
It is evident that these Hamiltonian functions \ (\ref{H37}) and \ (\ref{H38}%
) are conservation laws for the dynamical system \ (\ref{H34ab}), which were
also before found in \cite{PoP}.

\begin{remark}
It \ is worth to mention here that as the sum of vectors $%
\sum_{j=0}^{4}s_{j}\psi _{(\vartheta _{j})}\in T^{\ast }(M_{3})$ with
arbitrary $s_{j}\in \mathbb{R},j=\overline{0,4},$ solves the determining
equation \ (\ref{H33}), \ the corresponding operator 
\begin{equation}
\theta ^{-1}:=\sum_{j=0}^{4}s_{j}\vartheta _{j}^{-1}  \label{H39}
\end{equation}%
will also be a co-Poissonian operator for the dynamical system \ (\ref{H34ab}%
).
\end{remark}

\subsubsection{\protect\bigskip The case $N=4$}

The Riemann type hydrodynamical equation (\ref{r02}) at $N=4$ is equivalent
to the nonlinear dynamical system%
\begin{equation}
\left. 
\begin{array}{c}
u_{1,t}=u_{2}-u_{1}u_{1,x} \\ 
u_{2,t}=u_{3}-u_{1}u_{2,x} \\ 
u_{3,t}=u_{4}-u_{1}u_{3,x} \\ 
u_{4,t}=-u_{1}u_{4,x}%
\end{array}%
\right\} :=K[u_{1},u_{2},u_{3},u_{4}],  \label{H41}
\end{equation}%
where $K:M_{4}\rightarrow T(M_{4})$ is a suitable vector field on the smooth 
$2\pi $-periodic functional manifold $M_{4}:=$ $C^{\infty }(\mathbb{R}/2\pi 
\mathbb{Z};\mathbb{R}^{4}\mathbb{)}.$ To study its possible Hamiltonian
structures, we need to find functional solutions to the determining Lie-Lax
equation (\ref{H33}):%
\begin{equation}
\psi _{t}+K^{^{\prime ,\ast }}\psi =grad\text{ }\mathcal{L}  \label{H42}
\end{equation}%
for $\psi \in T^{\ast }(M_{4})$ and some smooth functional $\mathcal{L}\in
D(M_{4}),$ where 
\begin{equation}
K^{^{\prime }}=\left( 
\begin{array}{cccc}
-\partial u_{1} & 1 & 0 & 0 \\ 
-u_{2,x} & -u_{1}\partial & 1 & 0 \\ 
-u_{3,x} & 0 & -u_{1}\partial & 1 \\ 
-u_{4_{,}x} & 0 & 0 & -u_{1}\partial%
\end{array}%
\right) ,\text{ \ }K^{^{\prime ,\ast }}=\left( 
\begin{array}{cccc}
u_{1}\partial & -u_{2,x} & -u_{3,x} & -u_{4,x} \\ 
1 & \partial u_{1} & 0 & 0 \\ 
0 & 1 & \partial u_{1} & 0 \\ 
0 & 0 & 1 & \partial u_{1}%
\end{array}%
\right)  \label{H43}
\end{equation}%
are, respectively, the Frechet derivative of the mapping $K:M_{4}\rightarrow
T(M_{4})$ and its conjugate. \ The small parameter method \cite{PM,PoP},
applied to equation (\ref{H42}), gives rise to the following its exact
solution:%
\begin{equation}
\psi _{(\eta )}=(-u_{3,x},u_{2,x}/2,0,-\frac{u_{2,x}^{2}}{2u_{4,x}}+\frac{%
u_{1,x}u_{3,x}}{u_{4,x}})^{\intercal },\mathcal{L}_{(\eta )}=\int_{0}^{2\pi
}(u_{4}u_{1,x}-u_{2}u_{3,x}/2)dx.  \label{H44}
\end{equation}%
As a result, we obtain right away from (\ref{H34}) that dynamical system (%
\ref{H41}) is a Hamiltonian system on the functional manifold $M_{4},$ that
is 
\begin{equation}
K=-\vartheta \text{ }grad\text{ }H_{(\eta )},  \label{H45}
\end{equation}%
where the Hamiltonian functional equals 
\begin{equation}
H_{(\eta )}:=(\psi _{(\eta )},K)-\mathcal{L}_{(\eta )}=\int_{0}^{2\pi
}(u_{1}u_{4,x}-u_{2}u_{3,x})dx  \label{H46}
\end{equation}%
and the co-implectic operator equals 
\begin{equation}
\eta ^{-1}:=\psi _{(\eta )}^{\prime }-\psi _{(\eta )}^{\prime ,\ast }=\left( 
\begin{array}{cccc}
0 & 0 & -\partial & \partial \frac{u_{3,x}}{u_{4,x}} \\ 
0 & \partial & 0 & -\partial \frac{u_{2,x}}{u_{4,x}} \\ 
-\partial & 0 & 0 & \partial \frac{u_{1,x}}{u_{4,x}} \\ 
\frac{u_{3,x}}{u_{4,x}}\partial & -\frac{u_{2,x}}{u_{4,x}}\partial & \frac{%
u_{1,x}}{u_{4,x}}\partial & 
\begin{array}{c}
\frac{1}{2}[u_{4,x}^{-2}(u_{2,x}^{2}-2u_{1,x}u_{3,x})\partial + \\ 
+\partial (u_{2,x}^{2}-2u_{1,x}u_{3,x})u_{4,x}^{-2}]%
\end{array}%
\end{array}%
\right) .  \label{H47}
\end{equation}%
The latter is degenerate: \ the relationship $\eta
^{-1}(u_{1,x},u_{2,x},u_{3,x},u_{4,x})^{\intercal }=0$ exactly on the whole
manifold $M_{4},$ but the inverse to (\ref{H47}) exists and can be
calculated analytically.

Proceed now to constructing other solutions to the generating equation (\ref%
{H42}) at the reduced case $\mathcal{L}=0\mathbf{,}$ having rewritten it in
the following form:%
\begin{eqnarray}
D_{t}\psi _{1}-u_{2,x}\psi _{2}-u_{3,x}\psi _{3}-u_{4,x}\psi _{4} &=&0, 
\notag \\
D_{t}\psi _{2}+\psi _{1}+u_{1,x}\psi _{2} &=&0,  \notag \\
D_{t}\psi _{3}+\psi _{2}+u_{1,x}\psi _{3} &=&0,  \label{H48} \\
D_{t}\psi _{4}+\psi _{3}+u_{1,x}\psi _{4} &=&0,  \notag
\end{eqnarray}%
where $\psi :=(\psi _{1},\psi _{2},\psi _{3},\psi _{4})^{\intercal }\in
T^{\ast }(M_{4}).$ The latter is equivalent to the system of
differential-functional relationships%
\begin{equation}
\begin{array}{c}
D_{t}(\alpha \psi _{1})-\psi _{1}D_{t}\alpha -\psi _{2}D_{t}^{2}\alpha -\psi
_{3}D_{t}^{3}\alpha -\psi _{4}D_{t}^{4}\alpha =0, \\ 
D_{t}(\alpha \psi _{2})+\alpha \psi _{1}=0,D_{t}(\alpha \psi _{3})+\alpha
\psi _{2}=0,D_{t}(\alpha \psi _{4})+\alpha \psi _{4}=0,%
\end{array}
\label{H48a}
\end{equation}%
where we have denoted the function $\alpha :=u_{4,x}^{-1}.$ From (\ref{H48a}%
) one ensues, by means of simple calculations, the following generating
differential-functional equation 
\begin{equation}
D_{t}^{2}(\alpha \psi _{1})=0  \label{H49}
\end{equation}%
on the manifold $M_{4}.$ As above in the case $N=3,$ the obtained equation (%
\ref{H49}) can be easily enough solved by means of the
differential-algebraic approach, devised before in \cite{PoP}, which based
on the following observation: owing to the equality $D_{t}^{5}\alpha =0$ the
set 
\begin{equation}
\mathcal{A}\{u\}:=\{f_{0}\alpha +f_{1}D_{t}\alpha +f_{2}D_{t}^{2}\alpha
+f_{3}D_{t}^{3}\alpha +f_{4}D_{t}^{4}\alpha \in \mathcal{K}%
_{4}\{u\}:f_{j}\in \mathcal{K}_{4}\{u\},j=\overline{0,4}\}  \label{H50}
\end{equation}%
generates a finite dimensional invariant differential \ ideal of the
differential ring $\mathcal{K}_{4}\{u\}:=\mathcal{K}\{u\}|_{D_{t}^{4}u=0}.$
As a result of constructing solutions to the functional-differential
equation (\ref{H49}), as non-symmetric elements of the ideal (\ref{H50}),
one finds \ the following expression:%
\begin{equation}
\begin{array}{c}
\psi _{(\vartheta
)}=(u_{2,x}u_{4}-u_{2}u_{4,x},u_{4,x}u_{1}-u_{4}u_{1,x},2u_{4}-(u_{1}u_{3}-u_{2}^{2}/2)_{x},
\\ 
7u_{3}-u_{2}u_{1,x}+u_{1}u_{2,x}+\frac{u_{3,x}}{u_{4,x}}%
[u_{4}-(u_{1}u_{3}-u_{2}^{2}/2)_{x}])^{\intercal },\text{ \ }\mathcal{L}%
_{(\vartheta )}=0.%
\end{array}
\label{H51}
\end{equation}%
Now taking into account the expression (\ref{H34}) one easily obtains the
second co-Poissonian operator $\vartheta ^{-1}=\psi _{(\vartheta )}^{\prime
}-\psi _{(\vartheta )}^{\prime ,\ast }$ in the following matrix form: 
\begin{eqnarray}
\vartheta ^{-1} &=&\left( 
\begin{array}{cccc}
0 & -3u_{4,x} & u_{3}\partial & 
\begin{array}{c}
\begin{array}{c}
-\partial u_{2}-u_{2}\partial - \\ 
-u_{3}\partial \frac{u_{3,x}}{u_{4,x}}- \\ 
-\frac{u_{3,x}}{u_{4,x}}\ \partial u_{3}%
\end{array}
\\ 
\end{array}
\\ 
3u_{4,x} & 0 & u_{2}\partial & 
\begin{array}{c}
\\ 
\begin{array}{c}
\partial u_{1}+u_{1}\partial + \\ 
+u_{2}\partial \frac{u_{3,x}}{u_{4,x}}+ \\ 
+\frac{u_{3,x}}{u_{4,x}}\ \partial u_{2}%
\end{array}
\\ 
\end{array}
\\ 
\partial u_{3} & \partial u_{2} & -u_{1}\partial -u_{1}\partial & 
\begin{array}{c}
\\ 
\begin{array}{c}
-5+\partial (\frac{u_{4}-u_{2}u_{2,x}}{u_{4,x}})- \\ 
-\partial \frac{(u_{1}u_{3})_{x}}{u_{4,x}}-u_{1}\partial \frac{u_{3,x}}{%
u_{4,x}}%
\end{array}
\\ 
\end{array}
\\ 
\begin{array}{c}
-\partial u_{2}-u_{2}\partial - \\ 
-u_{3}\partial \frac{u_{3,x}}{u_{4,x}}- \\ 
-\frac{u_{3,x}}{u_{4,x}}\ \partial u_{3}%
\end{array}
& 
\begin{array}{c}
\partial u_{1}+u_{1}\partial + \\ 
+u_{2}\partial \frac{u_{3,x}}{u_{4,x}}+ \\ 
+\frac{u_{3,x}}{u_{4,x}}\ \partial u_{2}%
\end{array}
& 
\begin{array}{c}
5+(\frac{u_{4}-u_{2}u_{2,x}}{u_{4,x}})\partial - \\ 
-\frac{(u_{1}u_{3})_{x}}{u_{4,x}}\partial -\frac{u_{3,x}}{u_{4,x}}\partial
u_{1}%
\end{array}
& 
\begin{array}{c}
\\ 
\begin{array}{c}
\frac{u_{4}-u_{2}u_{2,x}}{u_{4,x}^{2}}\partial + \\ 
-\frac{u_{3,x}}{u_{4,x}^{2}}\partial -\partial \frac{u_{3,x}}{u_{4,x}^{2}}-
\\ 
-\partial \frac{u_{4}-u_{2}u_{2,x}}{u_{4,x}^{2}}%
\end{array}%
\end{array}%
\end{array}%
\right) ,  \notag \\
&&  \label{H52}
\end{eqnarray}%
satisfying jointly with (\ref{H47}) the gradient relationship (\ref{r71a}).

\section{Conclusion}

A differential-algebraic approach, elaborated in this article for revisiting
the integrability analysis of generalized Riemann type hydrodynamical
equation (\ref{r01}), made it possible to construct new and more simpler Lax
type representation for the general case $N\in \mathbb{Z}_{+},$ contrary to
those constructed before in \cite{PAPP,PoP}. \ This representation, obtained
by means of the suggested differential-algebraic approach, proves also to
coincide up to the scaling parameter $\lambda \in \mathbb{C}$ with that
first obtained by Z. Popowicz in \cite{Po}. \ It is worth to mention that
the corresponding Lax type representations \ \ \ for the generalized Riemann
type hierarchy are well fitting for constructing the related compatible
Poissonian structures and proving their corresponding bi-Hamiltonian
integrability. The corresponding calculations are fulfilled in details for
case $N=\overline{1,2}$ and preliminary results are obtained for cases $N=%
\overline{3,4}$ by means of the geometric method, devised in \cite%
{PM,BPS,PoP}. It was also demonstrated that the corresponding
differential-functional relationships give rise to the suitable compatible
Poissonian structures and present a very interesting mathematical problem
from the differential-algebraic point of view, which is planned to be
studied \ in other work.\bigskip

\section{Acknowledgements}

The authors are much obliged to Prof. Denis Blackmore (NJIT, New Jersey,
USA) for \ very instrumental discussion of the work, valuable advices,
comments and remarks. Special acknowledgment belongs to the Scientific and
Technological Research Council of Turkey (T\"{U}BITAK-2011) for a partial
support of the research by A.K. Prykarpatsky and Y.A. Prykarpatsky. M.V.P.
was in part supported by RFBR grant 08-01-00054 and a grant of the RAS
Presidium \textquotedblright Fundamental Problems in Nonlinear
Dynamics\textquotedblright . He also thanks the Wroclaw University for the
hospitality. Authors are also grateful to Prof. Z. Popowicz (Wroc\l aw
University, Poland) for sending his report for the \ Symposium "Integrable
Systems: Zielona Gora - May 30-31, 2011" before its publication.

\end{document}